\newcommand{\set}[1]{\{ #1 \}}
\newcommand{\seq}[1]{\langle #1 \rangle}
\newcommand{\Sig}[1]{\Sigma_{#1}}
\newcommand{\Sigstar}{\Sigma^*}
\newcommand{\fin}{\mathsf{UniCloOmg}}
\newcommand{\PSPACE}{\mathrm{PSPACE}}
\newcommand{\NPSPACE}{\mathrm{NPSPACE}}
\newcommand{\NLOGSPACE}{\mathrm{NLOGSPACE}}
\newcommand{\A}{\mathcal{A}}
\newcommand{\B}{\mathcal{B}}
\newcommand{\C}{\mathcal{C}}
\newcommand{\La}{\mathcal{L}}
\newcommand{\LA}{\mathcal{L(A)}}
\newcommand{\LB}{\mathcal{L(B)}}
\newcommand{\N}{\mathbb{N}}
\newcommand{\infi}{\mathrm{Inf}}
\newcommand{\So}{\Sigma^{\omega}}
\newcommand{\streett}{\mathrm{Streett}(R,G)}
\newcommand{\distk}{\mathrm{dist}_k}
\newcommand{\nextk}{\mathrm{next}_k}
\newcommand{\omegareg}{\mathbb{L}_{\omega}}
\newcommand{\finp}{\mathrm{FinParity}(p)}
\newcommand{\fstreettrg}{\mathrm{FinStreett}(R,G)}
\newcommand{\fbucf}{\mathrm{FinB\ddot{u}chi}(F)}
\newcommand{\parp}{\mathrm{Parity}(p)}
\newcommand{\bucf}{\mathrm{B\ddot{u}chi}(F)}
\newcommand{\pref}[1]{\mathrm{pref}(#1)}
\newcommand{\Final}{\mathrm{Final}}
\newcommand{\DB}{\mathit{DB}}
\newcommand{\DP}{\mathit{DP}}
\newcommand{\DS}{\mathit{DS}}
\newcommand{\DFB}{\mathit{DFB}}
\newcommand{\DFP}{\mathit{DFP}}
\newcommand{\DFS}{\mathit{DFS}}
\newcommand{\NB}{\mathit{NB}}
\newcommand{\NP}{\mathit{NP}}
\newcommand{\NS}{\mathit{NS}}
\newcommand{\NFB}{\mathit{NFB}}
\newcommand{\NFP}{\mathit{NFP}}
\newcommand{\NFS}{\mathit{NFS}}
\newcommand{\Acc}{\mathit{Acc}}
\newtheorem{notation}{Notation}
\def\qed{\rule{0.4em}{1.4ex}}
\begin{document}
\title{Finitary Languages}
\author{
Krishnendu Chatterjee \inst{1} \and
Nathana\"el Fijalkow \inst{1,2}  
}
\institute{IST Austria (Institute of Science and Technology, Austria) 
\email{krishnendu.chatterjee@ist.ac.at}
\and
\'ENS Cachan (\'Ecole Normale Supérieure de Cachan, France)
\email{nathanael.fijalkow@gmail.com}
}

\maketitle

\begin{abstract}
The class of $\omega$-regular languages provides a robust specification 
language in verification.
Every $\omega$-regular condition  can be decomposed into a safety part 
and a liveness part.
The liveness part ensures that something good happens ``eventually''.
Finitary liveness was proposed by Alur and Henzinger as a stronger formulation of liveness~\cite{AH98-TOPLAS}.
It requires that there exists an unknown, fixed bound $b$ 
such that something good happens within $b$ transitions.
In this work we consider automata with finitary acceptance conditions defined by finitary B\"uchi, parity and Streett languages.
We give their topological complexity of acceptance conditions, and present a regular-expression characterization of the languages they express.
We provide a classification of finitary and classical automata with respect to the expressive power, and give optimal algorithms for classical decisions questions on finitary automata.
We (a)~show that the finitary languages are $\Sig{2}$-complete;
(b)~present a complete picture of the expressive power of various 
classes of automata with finitary and infinitary acceptance conditions;
(c)~show that the languages defined by finitary parity 
automata exactly characterize the star-free fragment of $\omega B$-regular 
languages; and 
(d)~show that emptiness is $\NLOGSPACE$-complete and universality as well as
language inclusion are $\PSPACE$-complete for finitary automata.
\end{abstract}

\section{Introduction}

\noindent{\bf Classical $\omega$-regular languages: strengths and weakness.} 
The class of $\omega$-regular languages provides a robust language for 
specification for solving control and verification problems
(see, \textit{e.g}, \cite{PR89-POPL,RW87-SJCO}).
Every $\omega$-regular specification 
can be decomposed into a safety part and a liveness part~\cite{AS85-IPL}.
The safety part ensures that the component will not do anything ``bad''
(such as violate an invariant) within any finite number of transitions.
The liveness part ensures that the component will do something ``good''
(such as proceed, or respond, or terminate) in the long-run.
Liveness can be violated only in the limit, by infinite sequences of 
transitions, as no bound is stipulated on when the ``good'' thing must 
happen.
This infinitary, classical formulation of liveness has both strengths and 
weaknesses. 
A main strength is robustness, in particular, independence from the chosen 
granularity of transitions.
Another main strength is simplicity, allowing liveness to serve as an 
abstraction for complicated safety conditions.
For example, a component may always respond in a number of transitions 
that depends, in some complicated manner, on the exact size of the 
stimulus.
Yet for correctness, we may be interested only that the component will 
respond ``eventually''.
However, these  strengths also point to a weakness of the classical 
definition of liveness:
it can be satisfied by components that in practice are quite
unsatisfactory because no bound can be put on their response time.

\smallskip\noindent{\bf Stronger notion of liveness.}
For the weakness of the infinitary formulation of liveness, alternative and 
stronger formulations of liveness have been proposed.
One of these is {\em finitary} liveness \cite{AH98-TOPLAS}:
finitary liveness does not insist on a response within a known bound~$b$
(\textit{i.e}, every stimulus is followed by a response within $b$ transitions), 
but on response within some unknown bound
(\textit{i.e}, there exists $b$ such that every stimulus is followed by a response 
within $b$ transitions).  
Note that in the finitary case, the bound $b$ may be arbitrarily large, 
but the response time must not grow forever from one stimulus to the next.
In this way, finitary liveness still maintains the robustness (independence 
of step granularity) and simplicity (abstraction of complicated safety) of 
traditional liveness, while removing unsatisfactory implementations.

\smallskip\noindent{\bf Finitary parity and Streett conditions.}
The classical infinitary notion of fairness is given by the Streett condition:
it consists of a set of $d$ pairs of requests and 
corresponding responses (grants) and requires that every request 
that appears infinitely often must be responded infinitely often. 
Its finitary counterpart, the finitary Streett condition requires that there is a bound $b$ such that in the limit every request is responded within $b$ steps.
The classical infinitary parity condition consists of a priority function and requires that the minimum priority visited infinitely often is even.
Its finitary counterpart, the finitary parity condition requires that there is a bound $b$ such that in the limit after every odd priority a lower even priority is visited within $b$ steps.

\smallskip\noindent{\bf Results on classical automata.}
There are several robust results on the languages expressible by automata with infinitary B\"uchi, parity and Streett conditions, as follows:
(a)~\emph{Topological complexity:} it is known that B\"uchi languages are $\Pi_2$-complete, whereas parity and Streett languages lie in the boolean closure of $\Sig{2}$ and $\Pi_2$~\cite{MP92};
(b)~\emph{Automata expressive power:} non-deterministic automata
with B\"uchi conditions have the same expressive power as 
deterministic and non-deterministic parity and Streett 
automata~\cite{Cho74-JCSS,Sa92-STOC};
and (c)~\emph{Regular-expression characterization:} the class of languages expressed by deterministic parity is exactly defined by $\omega$-regular expressions (see the handbook~\cite{Tho97} for details).

\smallskip\noindent{\bf Our results.} For finitary B\"uchi, parity and Streett
languages, topological, automata-theoretic, regular-expression and decision problems studies were all missing.
In this work we present results in the four directions, as follows:
\begin{enumerate}
\item \emph{Topological complexity.} We show that finitary B\"uchi, parity and Streett conditions are $\Sig{2}$-complete.
\item \emph{Automata expressive power.} 
We show that finitary automata are incomparable in expressive power with classical automata.
As in the infinitray setting, we show that non-deterministic automata with finitary B\"uchi, parity and Streett conditions have the same expressive power, as well as deterministic parity and Streett automata, which are strictly more expressive than 
deterministic finitary B\"uchi automata.
However, in contrast to the infinitary case, for finitary parity
condition, non-deterministic automata are strictly more expressive than
the deterministic counterpart.
As a by-product we derive boolean closure properties for finitary automata.
\item \emph{Regular-expression characterization.} We consider the characterization of finitary automata through an extension of $\omega$-regular languages defined as $\omega B$-regular languages by~\cite{BC06-LICS}.
We show that languages defined by non-deterministic finitary B\"uchi automata
are exactly the star-free fragment of $\omega B$-regular languages. 
\item \emph{Decision problems.} We show that emptiness is $\NLOGSPACE$-complete and universality as well as language inclusion are $\PSPACE$-complete for finitary automata.
\end{enumerate}

\noindent{\bf Related works.}
The notion of finitary liveness was proposed and studied in~\cite{AH98-TOPLAS},
and games with finitary objectives was studied in~\cite{CHH09-ToCL}. 
A generalization of $\omega$-regular languages as $\omega B$-regular languages 
was introduced in~\cite{BC06-LICS} and variants have been studied in~\cite{BT09-FSTTCS} (also see~\cite{Bojanczyk10-STACS} for a survey);
a topological characterization has been given in~\cite{HST-MFCS10}.
Our work along with topological and automata-theoretic studies of finitary languages, explores the relation between finitary languages and $\omega B$-regular
expressions, rather than identifying a subclass of $\omega B$-regular expressions. 
We identify the exact subclass of $\omega B$-regular expressions that corresponds
to non-deterministic finitary parity automata.

\section{Definitions}


\subsection{Languages topological complexity}

Let $\Sigma$ be a finite set, called the alphabet.
A word $w$ is a sequence of letters, which can be either finite or 
infinite.
A language is a set of words: $L \subseteq \Sigstar$ is a 
language over finite words and $L \subseteq \So$ over infinite words.

\smallskip\noindent{\bf Cantor topology and Borel hierarchy.} 
Cantor topology on $\So$ is given by \emph{open} sets:
a language is open if it can be described as $W \cdot \So$ where $W \subseteq \Sigstar$.
Let $\Sig{1}$ denote the open sets and $\Pi_1$ denote the closed sets (a language is closed if its complement is open): they form the first level of the Borel hierarchy. 
Inductively, we define:
$\Sig{i+1}$ is obtained as countable union of $\Pi_i$ sets; and
$\Pi_{i+1}$ is obtained as countable intersection of $\Sig{i}$ sets.
The higher a language is in the Borel hierarchy, the higher its topological complexity.

Since the above classes are closed under continuous preimage, we can define the notion of Wadge reduction~\cite{Wadge-Thesis}: $L$ reduces to $L'$, denoted by $L \preceq L'$, if there exists a continuous function $f : \So \rightarrow \So$ such $L = f^-(L')$, where $f^-(L')$ is the preimage of $L'$ by $f$.
A language is hard with respect to a class if all languages of this class 
reduce to it. If it additionally belongs to this class, then it is complete.

For $L \subseteq \So$, let $\pref{L} \subseteq \Sigstar$ be the set of finite 
prefixes of words in $L$. 
The following property holds:

\begin{proposition}
For all languages $L \subseteq \So$, $L$ is closed if and only if,
for all infinite words $w$, if all finite prefixes of $w$ are in $\pref{L}$, then $w \in L$.
\end{proposition}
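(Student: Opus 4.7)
The plan is to prove both implications separately, using only the definition of open sets as languages of the form $W \cdot \So$ with $W \subseteq \Sigstar$.

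For the forward direction, I would assume $L$ is closed, so that $\So \setminus L = W \cdot \So$ for some $W \subseteq \Sigstar$. Taking an infinite word $w$ all of whose finite prefixes lie in $\pref{L}$, I would argue by contradiction: if $w \notin L$, then $w \in W \cdot \So$, so some finite prefix $u$ of $w$ belongs to $W$. But $u \in \pref{L}$ yields an extension $w' \in L$ with $u$ a prefix of $w'$, and then $w' \in W \cdot \So = \So \setminus L$, contradicting $w' \in L$.

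For the backward direction, I would explicitly exhibit $W$ witnessing openness of the complement. The natural candidate is $W = \Sigstar \setminus \pref{L}$, the set of finite words that cannot be extended to a word of $L$. I then show $\So \setminus L = W \cdot \So$. The inclusion $W \cdot \So \subseteq \So \setminus L$ is immediate: if $w = u \cdot w''$ with $u \notin \pref{L}$, then $w \notin L$, since otherwise $u$ would be in $\pref{L}$. The reverse inclusion is exactly the contrapositive of the hypothesis: if $w \notin L$, the assumption gives a finite prefix $u$ of $w$ with $u \notin \pref{L}$, hence $u \in W$ and $w \in W \cdot \So$.

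Neither direction presents a real obstacle; the only point that needs mild care is choosing $W$ correctly in the backward direction (taking $W$ to be the complement of $\pref{L}$ rather than some subset of it), and invoking the hypothesis in its contrapositive form. Both parts of the argument are purely set-theoretic manipulations based directly on the definitions of \emph{open}, \emph{closed}, and $\pref{L}$.
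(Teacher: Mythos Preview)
Your argument is correct. Both directions are handled cleanly: the forward direction correctly extracts a prefix $u \in W$ from $w \in W \cdot \So$ and derives a contradiction from $u \in \pref{L}$; the backward direction picks the right witness $W = \Sigstar \setminus \pref{L}$ and uses the hypothesis in contrapositive form exactly where needed.

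Note that the paper states this proposition without proof --- it is quoted as a standard characterization of closed sets in the Cantor topology and is used later (for instance in the proof of Lemma~\ref{fins}). So there is nothing to compare against; your write-up simply supplies the missing elementary verification, and it does so correctly.
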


\smallskip\noindent{\bf Classical liveness conditions.} 
We now consider three classes of languages that are widespread in verification and specification. They define liveness properties, \textit{i.e}, intuitively say that something good will happen ``eventually''.
For an infinite word $w$, let $\infi(w) \subseteq \Sigma$ denote the set of letters that appear infinitely often in $w$. 
The class of B\"uchi languages is defined as follows, given $F \subseteq \Sigma$:
$$\bucf  =  \set{ w \mid \infi(w) \cap F \not= \emptyset}$$
\textit{i.e}, the B\"uchi condition requires that some letter in $F$ appears infinitely often.
The class of parity languages is defined as follows, given $p : \Sigma \rightarrow \N$ a priority function that maps letters to integers (representing priorities):
$$\parp = \set{w \mid \min(p(\infi(w))) \mbox{ is even}}$$
\textit{i.e}, the parity condition requires that
the lowest priority the appears infinitely often is even.
The class of Streett languages is defined as follows, given $(R,G) = (R_i,G_i)_{1 \leq i \leq d}$, where $R_i,G_i \subseteq \Sigma$ are request-grant pairs:
$$\streett = \set{w \mid \forall i, 1 \leq i \leq d, \infi(w) \cap R_i \neq \emptyset \Rightarrow \infi(w) \cap G_i \neq \emptyset}$$
\textit{i.e}, the Streett condition requires that for all requests $R_i$, if it appears infinitely often, then the corresponding  grant $G_i$ also appears infinitely often.

The following theorem presents the topological complexity of the classical languages:

\begin{theorem}[Topological complexity of classical languages~\cite{MP92}]\label{thrm_top}
\begin{itemize}
\vspace{-2.5mm}
	\item For all $\emptyset \subset F \subset \Sigma$, the language
$\bucf$ is $\Pi_2$-complete.
	\item The parity and Streett languages lie in the boolean closure of 
$\Sig{2}$ and $\Pi_2$.
\end{itemize}
\end{theorem}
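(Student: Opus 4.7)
The plan is to handle B\"uchi first, since the parity and Streett cases reduce to it via finite boolean combinations.

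\textbf{B\"uchi.} For membership in $\Pi_2$, unpack the definition: $w \in \bucf$ iff for every $n$ there exists $m \geq n$ with $w_m \in F$, that is,
$$\bucf \;=\; \bigcap_{n \in \N} \bigcup_{m \geq n} \{w \in \So : w_m \in F\}.$$
Each inner set is clopen, since it depends only on the letter at position $m$, so the countable union is open and the countable intersection is $\Pi_2$. For $\Pi_2$-hardness, I would reduce from the well-known $\Pi_2$-complete language $L_\infty = \{w \in \{0,1\}^\omega : w \text{ has infinitely many } 1\text{'s}\}$. Since $\emptyset \subsetneq F \subsetneq \Sigma$, pick $a \in F$ and $b \in \Sigma \setminus F$, and define the letter-wise map $f$ by $f(w)_i = a$ if $w_i = 1$ and $f(w)_i = b$ otherwise. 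This $f$ is continuous (the first $n$ output letters only depend on the first $n$ input letters), and clearly $w \in L_\infty$ iff $f(w) \in \bucf$.

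\textbf{Parity and Streett.} For parity, finiteness of $\Sigma$ makes the range $p(\Sigma)$ finite; for each $k \in p(\Sigma)$ let
$$P_k \;=\; \{w \in \So : \infi(w) \cap p^{-1}(k) \neq \emptyset\},$$
which is $\Pi_2$ by the B\"uchi argument. Splitting by cases on the value of the minimum priority seen infinitely often gives
$$\parp \;=\; \bigcup_{k \in p(\Sigma),\, k \text{ even}} \Bigl( P_k \,\cap\, \bigcap_{j \in p(\Sigma),\, j < k} \overline{P_j} \Bigr),$$
a \emph{finite} boolean combination of $\Pi_2$ and $\Sig{2}$ sets. For Streett, letting $P_i^R$ and $P_i^G$ be the B\"uchi sets associated with $R_i$ and $G_i$,
$$\streett \;=\; \bigcap_{i=1}^{d} \bigl( \overline{P_i^R} \,\cup\, P_i^G \bigr),$$
again a finite boolean combination of $\Sig{2}$ and $\Pi_2$.

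The only nontrivial step is the $\Pi_2$-hardness of $\bucf$. I rely on the classical fact that $L_\infty$ is $\Pi_2$-complete in $\{0,1\}^\omega$, provable by coding membership in an arbitrary $\Pi_2$ set $\bigcap_n U_n$ (with $U_n = W_n \cdot \So$, $W_n \subseteq \Sigstar$) into a binary sequence whose successive $1$'s flag each $U_n$ the prefix has entered. Granted this, the reduction above only needs one letter inside $F$ and one outside, which is guaranteed by the strict inclusions $\emptyset \subsetneq F \subsetneq \Sigma$.
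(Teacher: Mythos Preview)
Your argument is correct and follows the standard line: the $\Pi_2$ membership of $\bucf$ via the $\forall n\,\exists m\geq n$ unfolding, hardness by a letter-for-letter reduction from the infinitely-many-ones language, and then the finite boolean decompositions for parity and Streett in terms of B\"uchi pieces. One minor remark: in the parity decomposition you should also account for the degenerate situation where some $P_k$ with $p^{-1}(k)=\emptyset$ appears (then $P_k=\emptyset$, which is harmless), but otherwise everything is in order.

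There is nothing to compare against, however: the paper does not prove this theorem. It is stated as a known result with a citation to~\cite{MP92} and is used later only as a black box (specifically, the $\Pi_2$-completeness of $\bucf$ is invoked in the proof of Theorem~\ref{thrm_topo} to deduce that the complement is $\Sig{2}$-complete). Your write-up is therefore a self-contained proof of a result the paper simply imports.
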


\subsection{Finitary languages}

The finitary parity and Streett languages have been defined in~\cite{CHH09-ToCL}. We recall their definitions, and also specialize them to finitary B\"uchi languages.
Let $(R,G) = (R_i,G_i)_{1 \leq i \leq d}$, where $R_i,G_i \subseteq \Sigma$, the definition for $\fstreettrg$ uses distance sequence as follows:
$$\distk^j (w,(R,G)) =
\begin{cases}
0 & w_k \notin R_j\\
\inf \{k'-k \mid k' \geq k, w_{k'} \in G_j \} & w_k \in R_j 
\end{cases}
$$
\textit{i.e}, given a position $k$ where $R_j$ is requested, $\distk^j (w,(R,G))$ is the time steps (number of transitions) between the request $R_j$ and the corresponding grant $G_j$.
Note that $\inf(\emptyset) = \infty$.
Then $\distk (w,(R,G)) = \max\set{\distk^j(w,p) \mid 1 \leq j \leq d}$ and:
$$\fstreettrg = \set{w \mid \limsup_k \distk (w,(R,G)) < \infty}$$
\textit{i.e}, the finitary Streett condition requires the supremum limit of the distance sequence to be bounded.

Since parity languages can be considered as a particular case of Streett languages, where $G_1 \subseteq R_1 \subseteq G_2 \subseteq R_2 \ldots$, the latter allows to define $\finp$. 
The same applies to finitary B\"uchi languages, which is a particular case of finitary parity languages where the letters from the set $F$ have priority $0$ and others have priority $1$. 
We get the following definitions.
Let $p : \Sigma \rightarrow \N$ a priority function, we define:
$$\distk (w,p) = \inf \set{k'-k \mid k' \geq k, p(w_{k'}) \textrm{ is even and } p(w_{k'}) \leq p(w_k)}$$
\textit{i.e}, given a position $k$ where $p(w_k)$ is odd, $\distk (w,p)$ is the time steps between the odd priority $p(w_k)$ and a lower even priority.
Then $\finp = \set{ w \mid \limsup_k \distk (w,p) < \infty }$.
We define similarly the finitary B\"uchi language: given $F \subseteq \Sigma$, let:
$$\nextk (w,F) = \inf \{k'-k \mid k' \geq k, w_{k'} \in F \}$$
\textit{i.e}, $\nextk (w,F)$ is the time steps before visiting a letter in $F$.
Then $\fbucf = \set{w \mid \limsup_k \nextk (w,F) < \infty}$.

\subsection{Automata, $\omega$-regular and finitary languages}

\begin{definition} An automaton is a tuple $\A = (Q, \Sigma, Q_0, \delta, \Acc)$, where $Q$ is a finite set of states, 
$\Sigma$ is the finite input alphabet, 
$Q_0 \subseteq Q$ is the set of initial states, 
$\delta \subseteq Q \times \Sigma \times Q$ is the transition relation and 
$\Acc \subseteq Q^{\omega}$ is the acceptance condition.
\end{definition}

An automaton is deterministic if it has a single initial state and 
for every state and letter there is at most one transition.
The transition relation of deterministic automata are described by functions $\delta : Q \times \Sigma \rightarrow Q$.
An automaton is complete if for every state and letter there is a transition.
This is the case when the transition function is \emph{total}.

\medskip\noindent{\bf Runs.} 
A run $\rho = q_0 q_1 \dots$ is a word over $Q$, where $q_0 \in Q_0$. 
The run $\rho$ is accepting if it is infinite and $\rho \in \Acc$.
We will write $p \xrightarrow{a} q$ to denote $(p,a,q) \in \delta$.
An infinite word $w = w_0 w_1 \ldots$ induces possibly several runs of $\A$:
a word $w$ induces a run $\rho = q_0 q_1 \ldots$ if for all $n \in \N, q_n \xrightarrow{w_n} q_{n+1} \dots$.
The language accepted by $\A$, denoted by $\LA \subseteq \Sigma^\omega$, is:
$$\LA = \set{ w \mid \mbox{there exists an accepting run } \rho \mbox{ induced by } w}.$$

\medskip\noindent{\bf Acceptance conditions.} 
We will consider various acceptance conditions for automata obtained
from the last section by considering $Q$ as the alphabet.
For example, given $F \subseteq Q$, the languages $\bucf$ and $\fbucf$ define B\"uchi and finitary B\"uchi acceptance conditions, respectively.
Automata with finitary acceptance conditions are referred as finitary automata; classical automata are those equipped with infinitary acceptance conditions.

\begin{notation}
We use a standard notation to denote the set of languages recognized by some 
class of automata. The first letter is either $N$ or $D$, where $N$ stands 
for ``non-deterministic'' and $D$ stands for ``deterministic''. 
The last letter refers to the acceptance condition:
$B$ stands for ``B\"uchi'', $P$ stands for ``parity'' 
and $S$ stands for ``Streett''. The acceptance condition may be prefixed by 
$F$ for ``finitary''. For example, $\NP$ denotes non-deterministic parity
automata, and $\DFS$ denotes deterministic finitary Streett automata. 
We have the following combination:
$$\left \{ \begin{array}{c} N \\ D \end{array} \right \} \cdot
\left \{ \begin{array}{c} F \\ \varepsilon \end{array} \right \} \cdot
\left \{ \begin{array}{c} B \\ P \\ S \end{array} \right \}$$
\end{notation}

\noindent We denote by $\omegareg$ the class of languages accepted by deterministic parity automata.
The following theorem summarizes the results of expressive power of classical automata~\cite{Buc62-CLMPS,Sa92-STOC,Cho74-JCSS,GH82-STOC}:

\begin{theorem}[Expressive power results for classical automata]\label{classical_aut}
$$\DB \subset \omegareg \doteq \NB = \DP = \NP = \DS = \NS$$
\end{theorem}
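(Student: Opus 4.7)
\medskip\noindent\textbf{Proof plan.}
The plan is to complete the cycle of inclusions $\DP \subseteq \NP \subseteq \NS \subseteq \NB \subseteq \DP$ together with $\DP \subseteq \DS \subseteq \NS$, which collapses all five classes on the right into a single expressive class, and then separately exhibit an $\omega$-regular language not accepted by any deterministic B\"uchi automaton. The syntactic inclusions $\DP \subseteq \NP \subseteq \NS$, $\DP \subseteq \DS \subseteq \NS$, and $\DB \subseteq \DP$ are immediate: determinism is a special case of non-determinism; a B\"uchi condition with final set $F$ is a parity condition with priority $0$ on $F$ and $1$ elsewhere; and a parity condition with priorities in $\{0,\ldots,2d\}$ is a Streett condition with nested request-grant pairs. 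So the only non-trivial equalities are $\NS \subseteq \NB$ and $\NB \subseteq \DP$.

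For $\NS \subseteq \NB$, given a non-deterministic Streett automaton with pairs $(R_i,G_i)_{1\leq i\leq d}$, I would take the product of its state space with a counter in $\{1,\ldots,d\}$ that attempts to cycle through the Streett indices in order: from $(q,i)$ the counter advances to $i+1$ upon seeing a state in $G_i$ (and the automaton may skip past $i$ at any time by guessing that $R_i$ will not be requested again, which is then verified by the other moves). The B\"uchi accepting set consists of those product states where the counter has just completed a full cycle. Correctness amounts to the observation that an infinite run satisfies the Streett condition iff some infinite suffix lets the counter complete infinitely many full cycles.

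The genuine difficulty, and the single main obstacle of the theorem, is $\NB \subseteq \DP$: the Safra/McNaughton determinization of non-deterministic B\"uchi automata. I would invoke Safra's construction, which builds a deterministic Rabin (equivalently, parity after an index-appearance reduction) automaton whose states are Safra trees tracking the subset of reachable states together with the ``youngest ancestor'' subset that most recently visited $F$. The parity of a transition is determined by which node is created, removed, or marked green. Correctness uses K\"onig's lemma on the Safra tree evolution to extract, from any accepted word, a stable node whose subset infinitely often visits $F$. Since the construction is classical I would cite it rather than reproduce it, noting that it is the only super-exponential (in fact $2^{O(n\log n)}$) step of the chain.

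Finally, to witness the strict inclusion $\DB \subsetneq \omegareg$, I would take the classical language $L = \Sigma^* \cdot \{a\}^\omega$ over $\Sigma = \{a,b\}$, trivially accepted by a two-state deterministic parity automaton with priority $0$ on $a$ and $1$ on $b$. Suppose a deterministic B\"uchi automaton $\A$ with accepting set $F$ recognised $L$. Since $a^\omega \in L$, the unique run on $a^\omega$ hits $F$ at some first time $n_1$; since $a^{n_1} b a^\omega \in L$, the run on this word hits $F$ at some later time $n_2$; iterating produces an infinite word with infinitely many $b$'s whose run visits $F$ infinitely often, so $\A$ would accept a word outside $L$, a contradiction.
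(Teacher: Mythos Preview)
The paper does not prove this theorem at all: it is stated as a summary of classical results and simply cites B\"uchi, Safra, Choueka, and Gurevich--Harrington. So there is no ``paper's own proof'' to compare against; your plan already goes well beyond what the paper does, and it follows the standard textbook route (syntactic inclusions, Streett-to-B\"uchi via cycling through pairs, Safra for $\NB \subseteq \DP$, and the usual $(a+b)^*a^\omega$ separator for $\DB \subsetneq \omegareg$).

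Two small points on correctness. First, in your separating example the priority assignment is wrong: with priority $0$ on $a$ and $1$ on $b$, the word $(ab)^\omega$ has minimum infinitely-visited priority $0$ and would be accepted, yet it is not in $\Sigma^*a^\omega$. You want priority $1$ on $b$ and $2$ on $a$ (equivalently, a co-B\"uchi condition rejecting infinitely many $b$'s). The pumping argument that follows is fine.

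Second, your $\NS \subseteq \NB$ sketch is a bit loose. When the automaton ``skips past $i$'' by guessing that $R_i$ will not be requested again, this guess must be recorded and enforced: if $R_i$ reappears, the run must die. Saying it is ``verified by the other moves'' does not capture this. The clean version is to guess up front (or at some nondeterministically chosen moment) the set $J \subseteq \{1,\ldots,d\}$ of indices with only finitely many requests, move to a copy that rejects on seeing any $R_j$ with $j\in J$, and cycle through the remaining indices waiting for each $G_i$ in turn.
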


\section{Topological complexity}

In this section we define a finitary operator $\fin$ that allows us to describe finitary B\"uchi, finitary parity and finitary Streett languages topologically and to relate them to the classical B\"uchi, parity and Streett languages;
we then give their topological complexity.

\medskip\noindent{\bf Union-closed-omega-regular operator on languages.}
Given a language $L \subseteq \So$, the language $\fin(L) \subseteq \So$ is 
the \emph{union} of the languages $M$ that are subsets of $L$,
\emph{$\omega$-regular} and \emph{closed}, \textit{i.e},
$\fin(L) = \bigcup \set{M \mid M \subseteq L, M \in \Pi_1, M \in \omegareg}$.

\begin{proposition}\label{fin} 
For all languages $L \subseteq \So$ we have $\fin(L) \in \Sig{2}$.
\end{proposition}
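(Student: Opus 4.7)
The plan is to show that the union defining $\fin(L)$ is actually countable, so that it directly witnesses $\fin(L) \in \Sig{2}$ via the definition of the Borel hierarchy.

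First, I would observe that the alphabet $\Sigma$ is finite, and every $\omega$-regular language is recognized by some deterministic parity automaton, which is a finite object over $\Sigma$. Since there are only countably many such finite descriptions, the class $\omegareg$ is itself countable. Consequently, the family
\[
\mathcal{F}(L) \;=\; \set{M \mid M \subseteq L,\; M \in \Pi_1,\; M \in \omegareg}
\]
is a subfamily of a countable set, hence countable. Enumerate it as $\mathcal{F}(L) = \set{M_n \mid n \in \N}$ (allowing repetition or the empty language if $\mathcal{F}(L)$ happens to be finite).

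Now by definition, $\fin(L) = \bigcup_{n \in \N} M_n$, where each $M_n$ is closed, i.e., $M_n \in \Pi_1$. Since $\Sig{2}$ is defined as the class of countable unions of $\Pi_1$ sets, we conclude that $\fin(L) \in \Sig{2}$.

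There is really no obstacle here; the only point worth stressing is the countability of $\omegareg$, which relies on the finiteness of $\Sigma$ together with the fact that $\omega$-regular languages admit finite presentations (for instance by deterministic parity automata, as granted by Theorem~\ref{classical_aut}). Once countability is in hand, the statement reduces to unfolding the definitions of $\fin$ and of the Borel class $\Sig{2}$.
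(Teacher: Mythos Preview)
Your proof is correct and follows essentially the same approach as the paper: both argue that $\omegareg$ is countable because $\omega$-regular languages are described by finite automata, and hence $\fin(L)$ is a countable union of closed sets, placing it in $\Sig{2}$. Your write-up simply spells out a few more details (the enumeration of $\mathcal{F}(L)$ and the appeal to Theorem~\ref{classical_aut}) than the paper's two-line version.
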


\begin{proof} 
Since the set of finite automata can be enumerated in sequence, it follows
that $\omegareg$ is countable.
So for all languages $L$, the set $\fin(L)$ is described as a countable
union of closed sets.
Hence $\fin(L) \in \Sig{2}$.
\hfill\qed
\end{proof}

We present a \emph{pumping lemma} for $\omega$-regular languages that we will use to prove the topological complexity of finitary languages.

\begin{lemma}[A pumping lemma]\label{lemm_pump1}
Let $M$ be an $\omega$-regular language.
There exists $n_0$ such that for all words $w \in M$, 
for all positions $j \geq n_0$, there exist $j \leq i_1 < i_2 \leq j + n_0$
such that for all $\ell \geq 0$ we 
have 
$w_0 w_1 w_2 \ldots w_{i_1-1} \cdot (w_{i_1} w_{i_1+1} \ldots w_{i_2-1})^\ell \cdot 
w_{i_2} w_{i_2 +1} \ldots 
\in M$.
\end{lemma}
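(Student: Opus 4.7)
The plan is to prove the pumping lemma via a standard pigeonhole argument on the run of a deterministic parity automaton recognizing $M$, together with the observation that the parity condition depends only on the set of states visited infinitely often. First, by Theorem~\ref{classical_aut} the language $M \in \omegareg$ is accepted by a deterministic parity automaton $\A = (Q,\Sigma,q_0,\delta,\parp)$; set $n_0 = |Q|$.

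Next, take any $w \in M$ and let $\rho = \rho_0 \rho_1 \rho_2 \ldots$ be the unique run of $\A$ on $w$, which is accepting by assumption. For any position $j \geq n_0$, consider the $n_0 + 1$ states $\rho_j, \rho_{j+1}, \ldots, \rho_{j + n_0}$; by the pigeonhole principle there exist indices $j \leq i_1 < i_2 \leq j + n_0$ with $\rho_{i_1} = \rho_{i_2}$. Denote $q = \rho_{i_1} = \rho_{i_2}$ and write $u = w_0 \ldots w_{i_1 - 1}$, $v = w_{i_1} \ldots w_{i_2 - 1}$, $w' = w_{i_2} w_{i_2 + 1} \ldots$, so that $w = u \cdot v \cdot w'$ and the run of $\A$ on $u$ ends in $q$, reading $v$ from $q$ returns to $q$, and reading $w'$ from $q$ produces the suffix $\rho_{i_2} \rho_{i_2 + 1} \ldots$ of $\rho$.

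For every $\ell \geq 0$, determinism of $\A$ then shows that the run of $\A$ on the candidate pumped word $u \cdot v^\ell \cdot w'$ consists of: the initial segment $\rho_0 \ldots \rho_{i_1}$ (ending in $q$), followed by $\ell$ iterations of the cycle $q \to \rho_{i_1 + 1} \to \ldots \to \rho_{i_2} = q$, followed by the suffix $\rho_{i_2} \rho_{i_2+1} \ldots$ of the original run $\rho$. Consequently the new run and $\rho$ differ only on a finite prefix, and in particular $\infi$ of the two runs coincide. Since the parity condition depends only on $\infi$, the new run is accepting, proving $u \cdot v^\ell \cdot w' \in M$ for every $\ell \geq 0$.

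The only subtle point — really the main thing to check — is that pumping preserves acceptance. This is immediate here because the acceptance condition of $\A$ is parity, hence tail-determined; the same argument would go through for any $\infi$-based condition, but not for a finitary condition, which is exactly why later sections distinguish the two. No nontrivial case analysis is needed: a single application of the pigeonhole principle on $n_0 + 1$ consecutive states in $\rho$ yields the desired $i_1 < i_2$ within distance $n_0$.
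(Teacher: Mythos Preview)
Your proof is correct and follows essentially the same approach as the paper: both take a deterministic parity automaton for $M$ with $n_0$ states, apply the pigeonhole principle to the $n_0+1$ consecutive states $\rho_j,\ldots,\rho_{j+n_0}$ of the unique run to find a repeated state, and conclude that the pumped run differs from the original only on a finite prefix, hence is accepted since parity is tail-determined. The paper additionally assumes the automaton is complete, but as your argument implicitly shows this is not needed here, since every transition in the pumped run already occurs in the original run on $w$.
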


\begin{proof}
Given $M$ is a $\omega$-regular language, let $\A$ be a complete and deterministic parity automata that recognizes $M$,
and let $n_0$ be the number of states of $\A$.
Consider a word $w = w_0 w_1 w_2 \ldots$ such that $w \in M$, and let 
$\rho=q_0 q_1 q_2 \ldots$ be the unique run induced by $w$ in $\A$. 
Consider a position $j$ in $w$ such that $j \geq n_0$.
Then there exist $j \leq i_1 < i_2 \leq j + n_0$ such that
$q_{i_1} = q_{i_2}$, this must happen as $\A$ has $n_0$ states.
For $\ell \geq 0$, if we consider the word
$w^\ell= 
w_0 w_1 w_2 \ldots w_{i_1-1} \cdot (w_{i_1} w_{i_1+1} \ldots w_{i_2-1})^\ell \cdot 
w_{i_2} w_{i_2 +1} \ldots$, then the unique run induced by 
$w^\ell$ in $\A$ is 
$\rho^\ell=q_0 q_1 q_2 \ldots q_{i_1-1} \cdot (q_{i_1} q_{i_1+1} \ldots q_{i_2-1})^\ell \cdot 
q_{i_2} q_{i_2 +1} \ldots$.
Since the parity condition is independent of finite prefixes and the run $\rho$ 
is accepted by $\A$, it follows that $\rho^\ell$ is accepted by $\A$.
Since $\A$ recognizes $M$, we have $w^\ell \in M$.
\hfill \qed
\end{proof}

The following lemma shows that $\fstreettrg$ is obtained by applying the $\fin$ operator to $\streett$.

\begin{lemma}\label{fins}
For all $(R,G) = (R_i,G_i)_{1 \leq i \leq d}$, where $R_i,G_i \subseteq \Sigma$, we have
$$\fin(\streett) = \fstreettrg.$$
\end{lemma}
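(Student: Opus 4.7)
The plan is to prove the two inclusions separately.

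For $\fstreettrg \subseteq \fin(\streett)$ I proceed by direct construction. Given $w \in \fstreettrg$, pick $B$ and $K$ with $\distk(w,(R,G)) \leq B$ for all $k \geq K$, and let
$$N_B \;=\; \{\, u \in \So \mid \forall k,\; \forall i,\; u_k \in R_i \;\Rightarrow\; \exists k' \in [k, k+B],\; u_{k'} \in G_i \,\}.$$
Then $N_B$ is closed (any violation is witnessed by a finite prefix), $\omega$-regular (a DFA can track for each pair the age of the oldest unanswered request, moving to a rejecting sink when it exceeds $B$), and satisfies $N_B \subseteq \streett$ since uniformly bounded response implies the classical liveness. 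Hence $M = \{w_0 \cdots w_{K-1}\} \cdot N_B$ is a closed $\omega$-regular subset of $\streett$ containing $w$, so $w \in \fin(\streett)$.

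For $\fin(\streett) \subseteq \fstreettrg$ I argue by contradiction. Let $w \in M$ for some closed $\omega$-regular $M \subseteq \streett$, and suppose $\limsup_k \distk(w, (R,G)) = \infty$. Then some index $j$ admits infinitely many positions $r$ with $w_r \in R_j$ whose distance to the next $G_j$-letter is arbitrarily large. Let $n_0$ be the constant given by Lemma~\ref{lemm_pump1} applied to $M$. The strategy is to extract a cycle $[i_1,i_2]$ in the run of $w$ on a deterministic parity automaton for $M$ whose factor $w_{i_1} \cdots w_{i_2-1}$ contains a letter in $R_j$ but none in $G_j$. Pumping this cycle to $\omega$ yields $w^\infty = w_0 \cdots w_{i_1-1} (w_{i_1} \cdots w_{i_2-1})^\omega$; each finite prefix of $w^\infty$ is a prefix of some $w^\ell \in M$ produced by the pumping lemma, so the closure-by-prefixes characterization and the closedness of $M$ together force $w^\infty \in M \subseteq \streett$. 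But $w^\infty$ has $R_j$ in the cycle (hence infinitely often) and $G_j$ only in its finite prefix (hence finitely often), contradicting the Streett condition at index $j$.

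The central technical difficulty is producing the cycle. Choosing $r$ with distance exceeding $n_0$ guarantees that any cycle the pumping lemma locates inside $[r, r+n_0]$ is already $G_j$-free. Forcing the cycle to contain an $R_j$-letter calls for a pigeonhole argument: among the infinitely many good $R_j$-positions, some state $q^\ast$ recurs at infinitely many of them (finite state space), and by picking two such positions $r < r'$ whose gap is smaller than the distance at $r$ --- achievable by exploiting the unbounded growth of the distance together with the bounded return time to $q^\ast$ inside the eventual SCC of the run --- the run's cycle at $q^\ast$ between $r$ and $r'$ starts with an $R_j$-letter and sits inside the no-$G_j$ window at $r$. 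This gap-versus-distance control is the delicate point; if it fails for a single state, iterating the pigeonhole across states of the SCC should yield a composite cycle with the required property.
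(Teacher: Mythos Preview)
Your argument for $\fstreettrg \subseteq \fin(\streett)$ is correct and is essentially the paper's (the paper writes the witnessing set as $\{w \mid \forall k \ge n,\ \distk(w,(R,G)) \le B\}$, which agrees with your $\{w_0\cdots w_{K-1}\}\cdot N_B$ up to the choice of prefix). For the converse you follow the paper's line---pump inside a long no-grant window and show $uv^\omega \notin \streett$---and you correctly isolate the one step the paper does not justify: that the pumped factor $v$ contains a request letter. The paper's pumping lemma only gives $j \le i_1$, so the request at position $j$ may sit in $u$ rather than in $v$; the sentence ``Hence there is a position $j'$ in $v$ such that there is request $i$ at $j'$'' is asserted without argument, and your instinct that extra work is needed is sound.

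Unfortunately the gap is not merely technical: the inclusion $\fin(\streett)\subseteq\fstreettrg$ fails already for a single pair with $G_1\subseteq R_1$. Over $\Sigma=\{a_0,a_1,a_2\}$ with $R_1=\{a_0,a_1\}$ and $G_1=\{a_0\}$, let $M$ be the set of words in which any two occurrences of $a_1$ are separated by an $a_0$ (equivalently, no factor $a_1\{a_1,a_2\}^*a_1$). Then $M$ is closed, $\omega$-regular (a three-state safety DFA suffices), and $M\subseteq\streett$: infinitely many $a_1$ forces infinitely many $a_0$, while finitely many $a_1$ leaves $\infi(w)\subseteq\{a_0,a_2\}$, where either $a_0$ supplies the grant or $a_2\notin R_1$ makes the pair vacuous. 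But $w=a_0 a_1 a_2^{8}\,a_0 a_1 a_2^{18}\,a_0 a_1 a_2^{38}\cdots$ lies in $M$ and has $\distk=9,19,39,\ldots$ at its $a_1$-positions, so $w\notin\fstreettrg$. On this $w$, pumping inside an $a_2$-block yields $v=a_2^m$ and $uv^\omega\in\streett$, exactly the failure you anticipated; and your pigeonhole repair cannot succeed either, since consecutive $a_1$-positions share the same DFA state but the gap between them always exceeds the distance to the next $a_0$ by one. The pumping argument is valid only in the B\"uchi special case, where a grant-free cycle alone already forces $uv^\omega\notin\bucf$.
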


\begin{proof} We present the two directions of the proof.
\begin{enumerate}
	\item We first show that $\fin(\streett) \subseteq \fstreettrg$.
Let $M \subseteq \streett$ such that $M$ is closed and $\omega$-regular. 
Let $w = w_0 w_1 \ldots \in M$, and assume towards contradiction, 
that $\limsup_k \distk (w,(R,G)) = \infty$. Hence for all $n_0 \in \N$, 
there exists $n \in \N$ such that $n \geq n_0$ and 
$\mathrm{dist}_n (w,(R,G)) \geq n_0$.
Let $n_0 \in \N$ given by the pumping lemma on $M$, 
from above given $n_0$ we obtain $j$ such that $j \geq n_0$ and 
$\mathrm{dist}_j (w,(R,G)) \geq n_0$.
By the pumping lemma we obtain the witness $j \leq i_1 < i_2 \leq j + n_0$.
Let $u = w_0 w_1 \ldots w_{i_1-1}$, $v= w_{i_1} w_{i_1+1} \ldots w_{i_2-1}$ and
$w' = w_{i_2} w_{i_2+1} \ldots$.
Since $w \in M$, by the pumping lemma for all $\ell \geq 0$ we have 
$u v^\ell w' \in M$.
This entails that all finite prefixes of the infinite word $uv^{\omega}$ are in 
$\pref{M}$. Since $M$ is closed, it follows that $uv^{\omega} \in M$.
Since $\mathrm{dist}_j(w,(R,G)) \geq n_0$ it follows that there is some 
request $i$ in position $j$, and there is no corresponding grant $i$ for 
the next $n_0$ steps.
Hence there is a position $j'$ in $v$ such that there is request $i$ at $j'$ 
and no corresponding grant in $v$, and thus
it follows that the word $u v^\omega \not\in \streett$.
This contradicts that $M \subseteq \streett$.
Hence it follows that $\fin(\streett) \subseteq \fstreettrg$.

	\item We now show the converse:
$\fin(\streett) \supseteq \fstreettrg$.
We have:
$$\begin{array}{rcl}
\fstreettrg & = & \displaystyle 
\{ w \mid \limsup_{k} \distk (w,(R,G)) < \infty \} \\
& = & \displaystyle \bigcup_{B \in \N} \{ w \mid \limsup_{k} \distk (w,(R,G)) \leq B \} \\
& = & \displaystyle \bigcup_{B \in \N} \bigcup_{n \in \N} \{ w \mid \forall k \geq n, \distk (w,(R,G)) \leq B \}
\end{array}$$
The language $\{ w \mid \forall k \geq n, \distk (w,(R,G)) \leq B \}$ is closed, $\omega$-regular, and included in 
$\streett$. 
Hence $\fstreettrg \subseteq \fin(\streett)$.
\end{enumerate}
The result follows. \hfill\qed
\end{proof}

Lemma~\ref{fins} naturally extends to finitary parity and finitary B\"uchi languages:

\begin{corollary}\label{finp}
The following assertions hold:
\begin{itemize}
\vspace{-2.5mm}
 \item For all $p : \Sigma \rightarrow \N$, we have $\fin(\parp) = \finp$;
 \item For all $F \subseteq \Sigma$, we have $\fin(\bucf) = \fbucf$.
\end{itemize}
\end{corollary}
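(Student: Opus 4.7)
The plan is to deduce both assertions of the corollary from Lemma~\ref{fins} by exhibiting, for each finitary parity (resp. finitary B\"uchi) language, a finitary Streett (resp. finitary parity) language that coincides with it, and by checking that under these reductions the classical language on top likewise coincides. The excerpt already notes that parity is a specialization of Streett via the nesting $G_1 \subseteq R_1 \subseteq G_2 \subseteq R_2 \subseteq \ldots$, and that B\"uchi is a specialization of parity via priorities $0$ (for $F$) and $1$ (for $\Sigma \setminus F$); we just have to verify that these specializations also transport the distance sequences, and the corollary will then follow from Lemma~\ref{fins} applied to the reduced Streett (resp. parity) language.

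Concretely, for the parity case, fix $p : \Sigma \rightarrow \N$ and let $d$ be large enough that every odd priority occurring in the image of $p$ lies in $\{1, 3, \ldots, 2d-1\}$. Define, for $1 \leq i \leq d$,
$$R_i = \{ a \in \Sigma \mid p(a) = 2i-1 \}, \qquad G_i = \{ a \in \Sigma \mid p(a) \text{ even and } p(a) \leq 2i-2 \}.$$
The standard encoding then gives $\parp = \streett$. For the finitary distances, I would check position by position: if $p(w_k) = 2i-1$ is odd, then the parity distance $\distk(w,p)$ is the waiting time until the next even priority $\leq 2i-2$, which matches $\distk^i(w,(R,G))$, while $\distk^j(w,(R,G)) = 0$ for $j \neq i$ since $w_k \notin R_j$; if $p(w_k)$ is even, both $\distk(w,p)$ and every $\distk^j(w,(R,G))$ vanish. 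Hence the two distance sequences are identical, so $\finp = \fstreettrg$. Lemma~\ref{fins} then yields $\fin(\parp) = \fin(\streett) = \fstreettrg = \finp$.

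For the B\"uchi case, given $F \subseteq \Sigma$, define $p : \Sigma \rightarrow \N$ by $p(a) = 0$ if $a \in F$ and $p(a) = 1$ otherwise. Then $\bucf = \parp$, and the only odd priority is $1$, so the parity distance at a position $k$ with $w_k \notin F$ is exactly $\nextk(w,F)$, the distance to the next occurrence of a letter of $F$; at positions $k$ with $w_k \in F$ both quantities are $0$. Therefore $\fbucf = \finp$, and applying the parity part of the corollary gives $\fin(\bucf) = \fin(\parp) = \finp = \fbucf$.

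No serious obstacle is anticipated: the entire argument is a bookkeeping reduction to Lemma~\ref{fins}, the only substantive point being the one-line verification that the encodings of parity into Streett and B\"uchi into parity preserve the pointwise distance sequences, not merely the limsup-is-finite property.
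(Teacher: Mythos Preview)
Your proposal is correct and follows essentially the same route as the paper: reduce both items to Lemma~\ref{fins} via the parity-as-Streett and B\"uchi-as-parity encodings and observe that the distance sequences (not just the classical languages) are preserved. The paper's own justification is the one-sentence remark that the finitary parity and B\"uchi distances were \emph{defined} as specializations of the Streett distances, so Lemma~\ref{fins} applies directly; you spell out the bookkeeping in more detail and use a disjoint-request encoding ($R_i=\{a:p(a)=2i-1\}$) rather than the nested one ($G_1\subseteq R_1\subseteq G_2\subseteq\ldots$) mentioned in the paper, but this is a cosmetic variation of the same argument.
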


B\"uchi languages are a special case of parity languages, and parity languages are in turn a special case of Streett languages. 
Since distance sequences for parity and B\"uchi languages have been defined as a special case of Streett languages, Corollary~\ref{finp} follows from Lemma~\ref{fins}.

The following lemma states that finitary B\"uchi languages are $\Sig{2}$-complete.

\begin{theorem}[Topological characterization of finitary languages]\label{thrm_topo}
The finitary B\"uchi, finitary parity and finitary Streett are $\Sig{2}$-complete.
\end{theorem}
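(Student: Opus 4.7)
The plan is to prove membership and hardness separately, then to transfer hardness through the natural inclusions of the three classes. For membership, Lemma~\ref{fins} and Corollary~\ref{finp} identify $\fstreettrg$, $\finp$, and $\fbucf$ as images under $\fin$ of their classical counterparts, and Proposition~\ref{fin} gives $\fin(L)\in\Sig{2}$ for every $L$; this settles the upper bound in one line.

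For hardness I would target only the smallest class, finitary B\"uchi, and transfer upward: every $\fbucf$ is a $\finp$ using priorities $0,1$, and every $\finp$ is a $\fstreettrg$ using nested request--grant pairs, so $\Sig{2}$-hardness for one finitary B\"uchi language implies it for all three. I would fix the standard $\Sig{2}$-complete reference language $L = \set{w \in \set{0,1}^\omega \mid w \text{ has only finitely many zeros}}$ and build a continuous reduction $f\colon\set{0,1}^\omega \to \set{a,b}^\omega$ targeting $\fbucf$ with $F = \set{a}$. The candidate is the position-dependent block substitution $f(w) = g_0(w_0) g_1(w_1) \cdots$ with $g_k(1)=a$ and $g_k(0)=b^{k+1}a$; continuity is immediate since the length-$n$ prefix of $f(w)$ is already determined by the length-$n$ prefix of $w$. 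If $w$ has only finitely many zeros then $f(w)$ ends in $a^\omega$, so $\nextk(f(w),\set{a}) = 0$ eventually; if $w$ has zeros at positions $k_1 < k_2 < \cdots$ then $f(w)$ contains $b$-blocks of lengths $k_i+1 \to \infty$, forcing $\limsup_k \nextk(f(w),\set{a}) = \infty$. Combining with the membership step gives $\Sig{2}$-completeness of $\fbucf$, and the class inclusion above lifts it to $\finp$ and $\fstreettrg$.

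The main obstacle is finding a reduction that does not collapse to a strictly weaker condition. Alphabet-preserving encodings such as $0\mapsto b$, $1\mapsto a$ are unfaithful: they turn ``finitely many zeros'' into the strictly weaker ``bounded blocks of $b$'s'', which is consistent with $\fbucf$ even when there are infinitely many zeros. The key idea is therefore to let the length of each $b$-block grow with its position, so that an infinite occurrence of the forbidden symbol forces unbounded next-$a$ distances, while a finite occurrence still leaves a pure $a^\omega$ tail.
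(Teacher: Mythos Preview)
Your proposal is correct and follows essentially the same approach as the paper: membership via Proposition~\ref{fin} and Corollary~\ref{finp}, hardness by reducing a co-B\"uchi--type $\Sig{2}$-complete language to $\fbucf$ through a continuous map that inflates each position into a block whose length grows with the index, then lifting to parity and Streett. The only cosmetic difference is the specific encoding---the paper stutters the $n$-th letter $2^n$ times over the same alphabet, whereas you send $0$ at position $k$ to $b^{k+1}a$---but the mechanism (unbounded block lengths forcing $\limsup_k \nextk = \infty$ exactly when the source word has infinitely many bad letters) is identical.
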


\begin{proof}
We show that if $\emptyset \subset F \subset \Sigma$, then $\fbucf$ is $\Sig{2}$-complete. It follows from Corollary~\ref{finp} that $\fbucf \in \Sig{2}$.
We now show that $\fbucf$ is $\Sig{2}$-hard.
By Theorem~\ref{thrm_top} we have that $\bucf$ is $\Pi_2$-complete, hence
$\So \backslash \bucf$ is $\Sig{2}$-complete.
We present a topological reduction to show that $\So \backslash \bucf \preceq \fbucf)$.
Let $b : \So \rightarrow \So$ be the stuttering function defined as follows:
$$\begin{array}{ccccccc}
w & = & w_0 & w_1 & \ldots & w_n & \ldots \\
b(w) & = & w_0 & \underbrace{w_1 w_1}_2 & \ldots & \underbrace{w_n w_n \ldots w_n}_{2^n} & \ldots
\end{array}$$
The function $b$ is continuous.
We check that the following holds: 
$$\infi(w) \subseteq F \ \mbox{ iff } \ \exists B \in \N, 
\exists n \in \N, \forall k \geq n, \nextk (b(w),F) \leq B.$$
Left to right direction: assume that from the position $n$ of $w$, letters belong to $F$.
Then from the position $2^n - 1$, letters of $b(w)$ belong to $F$, then $\nextk (b(w),F) = 0$ for $k \geq 2^n - 1$.\\
Right to left direction: let $B$ and $n$ be integers such that for all $k \geq n$ we have $\nextk (b(w),F) \leq B$.
Assume $2^{k-1} > B$ and $k \geq n$, then the letter in position $2^k-1$ in $b(w)$ is repeated $2^{k-1}$ times,
thus $\nextk (b(w),F)$ is either $0$ or higher than $2^{k-1}$.
The latter is not possible since it must be less than $B$. 
It follows that the letter in position $k$ in $w$ belongs to $F$.
Hence we get $\So \backslash \bucf \preceq \fbucf$, so $\fbucf$ is $\Sig{2}$-complete.
From this we deduce the two other claims as special cases.\hfill\qed
\end{proof}

\section{Expressive power of finitary automata}

In this section we consider the finitary automata, and compare their expressive power to classical automata.
We then address the question of determinization.
Deterministic finitary automata enjoy nice properties that allows to describe languages they recognize using the $\fin$ operator.
As a by-product we get boolean closure properties of finitary automata.

\subsection{Comparison with classical automata}

Finitary conditions allow to express bounds requirements:

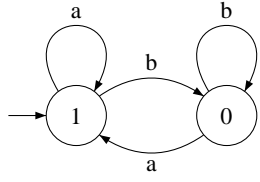
\begin{figure}
\centering
\begin{picture}(20,13)(0,0)
	\gasset{Nw=8,Nh=8}
 
	\node[Nmarks=i](1)(0,5){1}
	\node(0)(20,5){0}
	\drawedge[curvedepth=5](0,1){a}
	\drawedge[curvedepth=5](1,0){b}
	\drawloop[loopangle=90](0){b}
	\drawloop[loopangle=90](1){a}
\end{picture}
\caption{A finitary B\"uchi automaton $\A$}\label{auto1}
\end{figure}

\begin{example}[$\DFB \not\subseteq \omegareg$]\label{examp1}
Consider the finitary B\"uchi automaton shown 
in Fig.~\ref{auto1}, the state labeled~0 being its only final state.
Its language is $L_B = \set{ (b^{j_0} a^{f(0)}) \cdot 
(b^{j_1} a^{f(1)}) \cdot (b^{j_2} a^{f(2)}) \ldots \mid 
f : \N \rightarrow \N, f \mbox{ bounded, } \forall i \in \N, j_i \in \N }$.
Indeed, $0$-labeled state is visited while reading the letter $b$, and 
the $1$-labeled state is visited while reading the letter $a$. 
An infinite word is accepted iff the $0$-labeled state is 
visited infinitely often and there is a bound between two 
consecutive visits of the $0$-labeled state.
We can easily see that $L_B$ is not $\omega$-regular, using proof ideas from~\cite{BC06-LICS}: its complement would be $\omega$-regular, so it would contain ultimately periodic words, which is not the case.
\end{example}

However, finitary automata cannot distinguish between ``many b's'' and ``only b's'':

\begin{example}[$\DB \not\subseteq \NFB$]\label{examp2}
Consider the language of infinitely many $a$'s, \textit{i.e}, 
$L_I = \{ w \mid w \mbox{ has an infinite number of } a\}$.
The language $L_I$ is recognized by a simple deterministic B\"uchi automaton.
However, we can show that there is no finitary B\"uchi automata that recognizes $L_I$. Intuitively, such an automaton would, while reading the infinite word 
$w = ab\ ab^2\ ab^3\ ab^4\ldots ab^n\ldots \in L_I$, have to distinguish between all b's, otherwise it would accept a word with only b's at the end.
Assume towards contradiction that there exists $\A$ a non-deterministic finitary B\"uchi automaton with $N$ states recognizing $L_I$. Let us consider the infinite word $w$.
Since $w$ must be accepted by $\A$, there must be an accepting run $\rho$, represented as follows:
$$q_0 \xrightarrow{a} p_0 \xrightarrow{b} q_1 \ldots q_n \xrightarrow{a} p_n \xrightarrow{b^{n+1}} q_{n+1} \ldots$$
and
$$p_{n-1} \xrightarrow{b} q_{n,1} \xrightarrow{b} q_{n,2} \ldots \xrightarrow{b} q_{n,n-1} \xrightarrow{b} q_{n,n} = q_n \ldots$$
Since $\rho$ is accepting, there exists $B \in \N$, and $n \in \N$, such that for all $k \geq n$ we have $\distk (\rho,p) \leq B$. 
Let $c$ be the lowest priority infinitely visited in $\rho$. As $\rho$ is accepting, $c$ is even.
The state $p_{k-1}$ is in position $\frac{k\cdot (k+1)}{2}$ in $\rho$.
Let $k$ be an integer such that (a)~$\frac{k\cdot (k+1)}{2} \geq n$ and (b)~$k \geq (N+1)\cdot B$.
Let us consider the set of states $\{q_{k,1},\ldots, q_{k,k}\}$.
Since the distance function is bounded by $B$ from the $n$-th position, the priority $c$ appears at least once in each set of consecutively visited states of size $B$. 
Since $\frac{k\cdot (k+1)}{2} \geq n$ and $q_{k,1}$ is the state following $p_{k-1}$,
the latter holds from $q_{k,1}$.
Since $k \geq (N+1)\cdot B$, it appears at least $N+1$ times in $\{q_{k,1},\ldots,q_{k,k}\}$. Since there is $N$ states in $\A$, at least one state 
has been reached twice. We can thus iterate: the infinite word 
$w' = ab\ ab^2\ ab^3\ ab^4\ldots b^{k-1}a\ b^{\omega}$,
and the word $w'$ is accepted by $\A$. However, $w'\not \in L_I$ and hence we have 
a contradiction.
\end{example}

We summarize the results in the following theorem.

\begin{theorem} The following assertions hold:
(a)~$\DB \not\subseteq \NFB$;
(b)~$\DFB \not\subseteq \NB$.
\end{theorem}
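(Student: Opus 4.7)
The plan is to prove the two inseparations by exhibiting explicit witness languages, one for each direction, as already sketched in Examples~\ref{examp1} and~\ref{examp2}. For part (b), I would take the language $L_B$ of Example~\ref{examp1}, which is recognized by the deterministic finitary B\"uchi automaton in Fig.~\ref{auto1}, and therefore belongs to $\DFB$. To conclude $L_B \notin \NB$, I invoke Theorem~\ref{classical_aut}, which tells us $\NB \doteq \omegareg$, reducing the task to showing $L_B \notin \omegareg$. This uses the standard fact that every $\omega$-regular language contains an ultimately periodic word in each of its nonempty ``similarity classes'', while the complement of $L_B$ contains no ultimately periodic word (any ultimately periodic word of the form $u \cdot v^\omega$ has bounded gaps between occurrences of $a$ in its tail, so it lies in $L_B$). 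This contradicts $\omega$-regularity of the complement, and hence of $L_B$ itself.

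For part (a), I would use the language $L_I$ of words with infinitely many $a$'s from Example~\ref{examp2}, which trivially belongs to $\DB$. The harder direction, $L_I \notin \NFB$, requires a pumping-style argument on a hypothetical non-deterministic finitary B\"uchi automaton $\A$ with $N$ states and any accepting run $\rho$ on the input $w = a\,b\,a\,b^2\,a\,b^3\,\ldots$. The idea is to pick an index $k$ large enough compared to both $N$ and the eventual bound $B$ on the distance sequence of $\rho$; inside the block of $k$ consecutive $b$'s, the accepting set must be visited at least $\lfloor k/B \rfloor$ times, which by choosing $k \geq (N+1)B$ forces some state to repeat while reading only $b$'s. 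Pumping the corresponding $b$-cycle to infinity produces a new accepting run of $\A$ on a word ending in $b^\omega$, which is outside $L_I$, the desired contradiction.

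The main obstacle lies in the argument for $L_I \notin \NFB$: one must be careful that non-determinism does not allow $\A$ to ``postpone'' visits to the accepting set past the block of $b$'s. The key insight that makes the proof work is that a finitary acceptance condition imposes a \emph{uniform} bound $B$ on the gaps between final states from some position on, so that sufficiently long blocks of a single letter must themselves contain many accepting visits. Combined with the finiteness of the state space, this is what enables the cycle-pumping argument that breaks $L_I$. Both parts together then yield the theorem.
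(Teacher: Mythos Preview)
Your approach mirrors the paper's: both parts are meant to follow from Examples~\ref{examp1} and~\ref{examp2}, and your treatment of part~(a) is correct and essentially identical to the pumping argument in Example~\ref{examp2}.

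Part~(b), however, has a gap. First, the roles of $a$ and $b$ are reversed in your description: membership in $L_B$ requires that the \emph{$a$-runs} (gaps between consecutive $b$'s) be eventually bounded, since the final state~$0$ is entered precisely on reading $b$. More importantly, the claim that $\overline{L_B}$ contains no ultimately periodic word is false: $a^\omega$ is ultimately periodic and lies in $\overline{L_B}$, because the run on $a^\omega$ never reaches state~$0$. (The paper's Example~\ref{examp1} is equally terse on this point.) The easy repair is to intersect with the $\omega$-regular language of words containing infinitely many $b$'s: if $L_B$ were $\omega$-regular, then so would $\overline{L_B} \cap (\Sigma^* b)^\omega$, which is non-empty (it contains $b\,a\,b\,a^2\,b\,a^3\cdots$) and would therefore contain an ultimately periodic word $u v^\omega$; but any such word with infinitely many $b$'s has $b$ occurring in $v$, hence $a$-runs eventually bounded by $|v|$, and thus lies in $L_B$---a contradiction.
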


\subsection{Deterministic finitary automata}
Given a deterministic complete automaton $\A$ with accepting condition $\Acc$, 
we will consider the language obtained by using $\fin(\Acc)$ as acceptance condition.
Treating the automaton as a transducer, 
we consider the following function: $C_{\A} : \So  \to Q^{\omega}$ 
which maps an infinite word $w$ to the unique run $\rho$ of $\A$ on $w$ (there is a unique run since $\A$ is deterministic and complete).
Then: $$\LA = \{w \mid C_{\A}(w) \in \Acc \} = C_{\A}^- (\Acc).$$
We will focus on the following property: $C_{\A}^- (\fin(\Acc)) = \fin(C_{\A}^- (\Acc))$, which follows from the following lemma. 
Deterministic complete automata, regarded as transducers, preserve topology and $\omega$-regularity. Hence applying the finitary operator $\fin$ to the input (the language $L$) or to the acceptance condition $\Acc$ is equivalent.

\begin{lemma}\label{lemm_det} 
For all $\A = (Q, \Sigma, q_0, \delta, \Acc)$ deterministic 
complete automaton, we have:
\begin{enumerate}
\vspace{-2.5mm}
\item for all $A \subseteq Q^{\omega}$, $A$ is closed 
$\Rightarrow  C_{\A}^- (A)$ closed ($C_{\A}$ is continuous).
\item for all $L \subseteq \So$, $L$ is closed 
$\Rightarrow  C_{\A} (L)$ closed ($C_{\A}$ is closed).
\item for all $A \subseteq Q^{\omega}$, $A$ is $\omega$-regular 
$\Rightarrow  C_{\A}^- (A)$ $\omega$-regular.
\item for all $L \subseteq \So$, $L$ is $\omega$-regular 
$\Rightarrow  C_{\A} (L)$ $\omega$-regular.
\end{enumerate}
\end{lemma}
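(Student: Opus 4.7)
The plan is to handle the four parts separately, exploiting determinism and completeness of $\A$, and grouping them into two topological statements and two $\omega$-regularity statements. For part~(1), I will observe that, since $\A$ is deterministic and complete, the first $n+1$ states of the run $C_{\A}(w)$ are entirely determined by the length-$n$ prefix of $w$; consequently the preimage of any basic cylinder in $Q^{\omega}$ is a finite union of basic cylinders in $\So$, hence open, which gives continuity and part~(1) immediately.

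For part~(2), I will use the prefix characterization of closed sets stated earlier. Let $L \subseteq \So$ be closed and let $\rho \in Q^{\omega}$ be such that every finite prefix of $\rho$ lies in $\pref{C_{\A}(L)}$. Define $L_n = \{ w \in L \mid C_{\A}(w) \text{ starts with } \rho_0 \ldots \rho_n \}$. By the observation used in part~(1), the predicate ``the run starts with $\rho_0 \ldots \rho_n$'' is clopen in $w$, so each $L_n$ is closed in $\So$; by hypothesis the $L_n$ are nonempty, and they are nested by construction. Compactness of $\So$ then yields a word $w \in \bigcap_n L_n$, whence $w \in L$ and $C_{\A}(w) = \rho$, establishing $\rho \in C_{\A}(L)$. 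I expect this compactness step (essentially a K\"onig-style argument) to be the main subtlety of the lemma; the remaining two parts are routine product constructions.

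For part~(3), given an automaton $\B$ over alphabet $Q$ recognizing $A$, I will build the synchronous product $\A \times \B$ reading $\Sigma$: the $\A$-component maintains the current state of the run induced by the input word and passes it as input to $\B$, while the acceptance condition is that of $\B$ lifted to the product by projection onto the second component; the product accepts exactly $C_{\A}^-(A)$. For part~(4), given an automaton $\B$ over $\Sigma$ recognizing $L$, I will construct a nondeterministic automaton $\C$ over alphabet $Q$ whose initial transitions enforce $\rho_0 = q_0$ and which, at each position $n$, guesses a letter $a_n \in \Sigma$ with $\delta_{\A}(\rho_n, a_n) = \rho_{n+1}$, feeding $a_n$ to a simulated copy of $\B$; $\C$ inherits its acceptance condition from $\B$. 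Then $\C$ accepts $\rho$ precisely when some $w \in L$ satisfies $C_{\A}(w) = \rho$, so $C_{\A}(L)$ is $\omega$-regular.
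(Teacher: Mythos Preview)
Your proposal is correct and essentially matches the paper's proof. Parts~(3) and~(4) are the same product constructions as in the paper; for part~(1) you argue via preimages of basic cylinders whereas the paper argues directly via the prefix characterization of closed sets, but these are equivalent one-line observations; for part~(2) both you and the paper run the same K\"onig/compactness argument, with your formulation (nested nonempty closed sets in the compact space $\So$) being somewhat more explicit than the paper's ``define by induction an infinite nested sequence of finite words.''
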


\begin{proof}
We prove all the cases below.
\begin{enumerate}
	\item Let $A \subseteq Q^{\omega}$ such that $A$ is closed. 
Let $w$ be such that for all $n \in \N$ we have $w_0 \ldots w_n \in 
\pref{C_{\A}^- (A)}$. 
We define the run $\rho = C_{\A}(w)$ and show that $\rho = q_0 q_1 \ldots \in A$.
Since $A$ is closed, we will show for all $n \in \N$ we have 
$q_0 \ldots q_n \in \pref{A}$. 
From the hypothesis we have $w_0 \ldots w_{n-1} \in \pref{C_{\A}^- (A)}$, and 
then there exists an infinite word $u$ such that 
$C_{\A}(w_0 \ldots w_{n-1} u) \in A$. 
Let $C_{\A}(w_0 \ldots w_{n-1} u) = q_0 q'_1 \ldots q'_n \ldots$,
then we have $q_0 \xrightarrow{w_0} q'_1 \xrightarrow{w_1} q'_2
 \cdots \xrightarrow{w_{n-1}} q'_n \cdots$. 
Since $\A$ is deterministic, we get $q'_i = q_i$, and hence
$q_0 \ldots q_n \in \pref{A}$. 

	\item Let $L \subseteq \So$ such that $L$ is closed. 
Let $\rho = q_0 q_1 \ldots$ such that for all $n \in \N$ 
we have $q_0 \ldots q_n \in \pref{C_{\A}(L)}$. 
Then for all $n \in \N$, there exists a word $w_0 w_1 \ldots w_{n-1}$ such that
$q_0 \xrightarrow{w_0} q_1 \xrightarrow{w_1}q_2\dots\xrightarrow{w_{n-1}} q_n$,
and $w_0 w_1 \ldots w_{n-1} \in \pref{L}$.
We define by induction on $n$ an infinite nested sequence of finite words 
$w_0 w_1 \ldots w_n \in \pref{L}$.
We denote by $w$ the limit of this nested sequence of finite words.
We have that $\rho = C_{\A}(w)$. Since $L$ is closed, $w \in L$.

	\item Let $A \subseteq Q^{\omega}$ such that $A$ recognized by a B\"uchi automaton 
$\B = (Q_{\B}, Q, P_0, \tau, F)$. 
We define the B\"uchi automaton 
$\C = (Q \times Q_{\B}, \Sigma, \set{q_0} \times P_0, \gamma, Q_{\B} \times F)$,
where $(q_1,p_1) \xrightarrow{\sigma} (q_2,p_2)$ iff $q_1 \xrightarrow{\sigma} q_2$ in $\A$
and $p_1 \xrightarrow{q_1} p_2$ in $\B$.
We now show the correctness of our construction.
Let $w = w_0 w_1 \dots$ accepted by $\C$, then there exists an accepting run 
$\rho$, as follows:
\[
(q_0,p_0) \xrightarrow{w_0} (q_1,p_1) \xrightarrow{w_1} (q_2,p_2) \dots (q_n,p_n) \xrightarrow{w_n} (q_{n+1},p_{n+1}) \dots
\]
where the second component visits $F$ infinitely often. Hence:
$$(\dag) 
\left\lbrace
\begin{array}{cc}
q_0 \xrightarrow{w_0} q_1 \xrightarrow{w_1} q_2 \dots q_n \xrightarrow{w_n} q_{n+1} \dots \mbox{in } \A \\
p_0 \xrightarrow{q_0} p_1 \xrightarrow{q_1} p_2 \dots p_n \xrightarrow{q_n} p_{n+1} \dots \mbox{in } \B \quad
\end{array}
\right.$$
Hence from $(\dag)$, we have $C_{\A}(w) = q_0 q_1 \dots \in \La(\B) = A$, and
it follows that $w \in C_{\A}^- (A)$.
Conversely, let $w \in C_{\A}^- (A)$, then we 
have $\rho = C_{\A}(w) = q_0 q_1 \dots \in A = \La(\B)$.
Then the above statement $(\dag)$ holds, which entails that $w$ is 
accepted by $\C$.
It follows that $\C$ recognizes $C_{\A}^- (A)$.

	\item Let $L \subseteq \So$ such that $L$ is recognized by a B\"uchi automaton 
$\B = (Q_{\B}, \Sigma, P_0, \tau, F)$.
We define the B\"uchi automaton 
$\C = (Q \times Q_{\B}, Q, \set{q_0} \times P_0, \gamma, Q \times F)$,
where $(q,p_1) \xrightarrow{q} (q',p_2)$ iff 
there exists $\sigma \in \Sigma$, such that $q \xrightarrow{\sigma} q'$ in $\A$
and $p_1 \xrightarrow{\sigma} p_2$ in $\B$.
A proof similar to above show that $\C$ recognizes $C_{\A}(L)$.
\end{enumerate} 
The desired result follows.
\hfill\qed
\end{proof}

\begin{theorem}\label{findet}
For any deterministic complete automaton
$\A = (Q, \Sigma, q_0, \delta, \Acc)$ recognizing a language $L$, 
the finitary restriction of this automaton $\fin(\A) = (Q, \Sigma, q_0, \delta, \fin(\Acc))$ recognizes $\fin(L)$.
\end{theorem}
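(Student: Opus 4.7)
The plan is to show $\La(\fin(\A)) = \fin(L)$ by unfolding both sides through the run map $C_{\A}$ and then using Lemma~\ref{lemm_det} to transport closed $\omega$-regular witnesses back and forth across $C_{\A}$. Since $\A$ is deterministic and complete, $\La(\fin(\A)) = C_{\A}^-(\fin(\Acc))$ and $L = C_{\A}^-(\Acc)$, so the goal reduces to the commutation identity
$$C_{\A}^-(\fin(\Acc)) \ = \ \fin(C_{\A}^-(\Acc)).$$
Intuitively, determinism and completeness turn $C_{\A}$ into a topology- and $\omega$-regularity-preserving bridge between the lattice of closed $\omega$-regular subsets of $Q^\omega$ and that of $\Sigstar$, so the $\fin$ operator is insensitive to which side we compute it on.

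For the inclusion $C_{\A}^-(\fin(\Acc)) \subseteq \fin(L)$, I would fix any closed $\omega$-regular $A \subseteq \Acc$ that contributes to the union defining $\fin(\Acc)$. By items (1) and~(3) of Lemma~\ref{lemm_det}, $C_{\A}^-(A)$ is again closed and $\omega$-regular, and clearly $C_{\A}^-(A) \subseteq C_{\A}^-(\Acc) = L$. Hence $C_{\A}^-(A)$ is a legitimate witness in the union defining $\fin(L)$, and taking the union over all such $A$ yields the inclusion.

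For the reverse inclusion $\fin(L) \subseteq C_{\A}^-(\fin(\Acc))$, I would fix any closed $\omega$-regular $M \subseteq L$ and consider its image $A := C_{\A}(M)$. Items (2) and~(4) of Lemma~\ref{lemm_det} give that $A$ is closed and $\omega$-regular, and from $M \subseteq L = C_{\A}^-(\Acc)$ we obtain $A = C_{\A}(M) \subseteq \Acc$. Thus $A$ is a valid witness in the union defining $\fin(\Acc)$, and using the elementary set-theoretic inclusion $M \subseteq C_{\A}^-(C_{\A}(M)) = C_{\A}^-(A)$ we get $M \subseteq C_{\A}^-(\fin(\Acc))$. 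Unioning over all such $M$ completes the argument.

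I do not expect a real obstacle: Lemma~\ref{lemm_det} does all of the work, and the only point requiring care is that the image step uses $M \subseteq C_{\A}^-(C_{\A}(M))$, which holds for any map without needing determinism — determinism and completeness of $\A$ are, however, essential for the four preservation properties that are being invoked, and in particular for $L = C_{\A}^-(\Acc)$ to hold unambiguously.
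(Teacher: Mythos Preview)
Your proposal is correct and takes essentially the same approach as the paper: reduce to the commutation identity $C_{\A}^-(\fin(\Acc)) = \fin(C_{\A}^-(\Acc))$ and derive it from the four preservation properties of Lemma~\ref{lemm_det}. In fact you spell out the two inclusions more explicitly than the paper, which simply asserts that the identity follows from the lemma; one small slip is that the target lattice should be subsets of $\So$, not $\Sigstar$.
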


\begin{proof}
A word $w$ is accepted by $\fin(\A)$ iff $w \in C_{\A}^- (\fin(\Acc)) = \fin(C_{\A}^- (\Acc)) = \fin(L)$. \hfill\qed
\end{proof}

Theorem~\ref{findet} allows to extend all known results on deterministic classes to finitary deterministic classes: as a corollary, we have $\DFB \subset \DFP$ and $\DFP = \DFS$.

We now show that non-deterministic finitary parity automata are more 
expressive than deterministic finitary parity automata.
However, for every language $L \in \omegareg$ there exists $\A \in \DP$ such that $\A$ recognizes $L$, and by Theorem~\ref{findet} the deterministic finitary parity automaton $\fin(\A)$ recognizes $\fin(L)$.

\begin{corollary}
For every language $L \in \omegareg$ there is a deterministic finitary 
parity automata $\A$ such that $\LA=\fin(L)$. 
\end{corollary}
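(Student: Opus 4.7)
The plan is to chain together the definition of $\omegareg$, the correspondence between the $\fin$ operator and finitary parity conditions, and Theorem~\ref{findet}. By definition, $\omegareg$ is the class of languages recognized by deterministic parity automata, so for $L \in \omegareg$ I start by fixing some deterministic parity automaton $\A_0 = (Q, \Sigma, q_0, \delta_0, \parp)$ recognizing $L$.

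The first real step is to pass to a \emph{complete} deterministic parity automaton, since Theorem~\ref{findet} is stated for deterministic \emph{complete} automata. I would do this by adding a fresh sink state $s$ to $Q$, redirecting every missing transition to $s$, and extending the priority function by assigning $s$ some odd priority (say $1$). The resulting automaton $\A = (Q \cup \{s\}, \Sigma, q_0, \delta, \parp)$ is complete, deterministic, and still recognizes $L$: any run that used to fail by having no transition now goes to $s$ forever, and the unique infimum priority visited infinitely often is odd, so it is rejected.

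Now I apply Theorem~\ref{findet} to $\A$: the automaton $\fin(\A) = (Q \cup \{s\}, \Sigma, q_0, \delta, \fin(\parp))$ recognizes $\fin(L)$. By Corollary~\ref{finp}, $\fin(\parp) = \finp$, so $\fin(\A)$ is precisely a deterministic finitary parity automaton (with the same priority function $p$ used as a finitary parity acceptance condition). Thus $\fin(\A)$ is the required $\A$ with $\La(\fin(\A)) = \fin(L)$.

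There is essentially no obstacle; the only mild technical point is completeness, and that is handled by the sink-state trick above. Everything else is a direct invocation of earlier results.
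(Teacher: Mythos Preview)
Your proposal is correct and follows essentially the same route as the paper: take a deterministic parity automaton for $L$ and apply Theorem~\ref{findet}. You are simply more explicit than the paper about two points it leaves implicit, namely making the automaton complete via a rejecting sink and invoking Corollary~\ref{finp} to identify $\fin(\parp)$ with $\finp$.
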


\begin{example}[$\DFP \subset \NFP$]\label{examp3}
As for Example~\ref{examp1} we consider the languages
$L_1 = \set{ (a^{j_0} b^{f(0)}) \cdot (a^{j_1} b^{f(1)}) \cdot (a^{j_2} b^{f(2)}) \ldots \mid f : \N \rightarrow \N, f \mbox{ bounded, } \forall i \in \N, j_i \in \N }$ and 
$L_2 = \set{ (a^{f(0)} b^{j_0}) \cdot (a^{f(1)} b^{j_1}) \cdot (a^{f(2)} b^{j_2}) \ldots \mid f : \N \rightarrow \N, f \mbox{ bounded, } \forall i \in \N, j_i \in \N }$.
It follows from Example~\ref{examp1} that both $L_1$ and $L_2$ belong to 
$\DFP$, hence to $\NFP$. A finitary parity automaton, relying on non-determinism, is easily built to recognize $L = L_1 \cup L_2$, hence $L \in \NFP$.
We can show that we cannot bypass this non-determinism, as by reading a word we have to decide well in advance which sequence will be bounded: a's or b's, \textit{i.e}, $L \notin \DFP$.
To prove it, we interleave words of the form $(a^* \cdot b^*)^* \cdot a^\omega$ and $(a^* \cdot b^*)^* \cdot b^\omega$, and use a pumping argument to reach a contradiction.
Assume towards contradiction that $L \in \DFP$, and let $\A$ be a deterministic complete finitary parity automaton with $N$ states that recognizes $L$.
Let $q_0$ be the starting state.
Since $a^\omega$ belongs to $L$, its unique run on $\A$ is accepting, and can be decomposed as follows:
$q_0 \xrightarrow{a^{n_0}} s_0 \xrightarrow{a^{p_0}} s_0 \xrightarrow{a^{p_0}} \dots$ 
where $s_0$ is the lowest priority visited infinitely often while reading $a^\omega$.
Then, $a^{n_0} b^\omega$ belongs to this $L$, its unique run on $\A$ is accepting, and has the following shape:
$q_0 \xrightarrow{a^{n_0}} s_0 \xrightarrow{b^{n'_0}} t_0 \xrightarrow{b^{p'_0}} t_0 \xrightarrow{b^{p'_0}} \dots$ 
where $t_0$ is the lowest priority visited infinitely often while reading $a^{n_0} b^\omega$.
Repeating this construction and by induction we have, as shown in Fig~\ref{fig_no_dfp}:
\begin{figure}
\begin{center}
\begin{picture}(105,25)(0,-4)
	\gasset{Nframe=n,Nw=6,Nh=5}
 
	\node(q_0)(0,5){$q_0$}
	\node(s_0)(15,15){$s_0$}
	\node(t_0)(30,5){$t_0$}
	\node(s_1)(45,15){$s_1$}

	\drawedge(q_0,s_0){$a^{n_0}$}

	\drawloop[loopangle=90](s_0){$a^{p_0}$}
	\drawedge(s_0,t_0){$b^{n'_0}$}
	\drawloop[loopangle=-90](t_0){$b^{p'_0}$}

	\drawedge(t_0,s_1){$a^{n_1}$}
	\drawloop[loopangle=90](s_1){$a^{p_1}$}

	\node(t_k-1)(75,5){$t_{k-1}$}
	\node(s_k)(90,15){$s_k$}
	\node(t_k)(105,5){$t_k$}

	\drawedge[dash={1.5}0](s_1,t_k-1){}

	\drawedge(t_k-1,s_k){$a^{n_k}$}
	\drawloop[loopangle=-90](t_k-1){$b^{p'_{k-1}}$}
	\drawloop[loopangle=90](s_k){$a^{p_k}$}
	\drawedge(s_k,t_k){$b^{n'_k}$}
	\drawloop[loopangle=-90](t_k){$b^{p'_k}$}
\end{picture}
\end{center}
\caption{Inductive construction showing that $L \notin \DFP$.}\label{fig_no_dfp}
\end{figure}
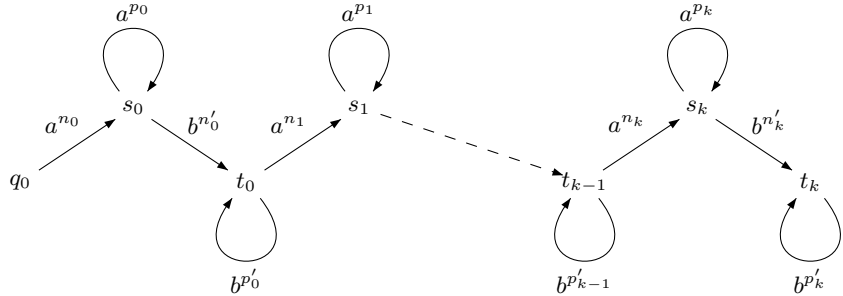
where $s_k$ is the lowest priority visited infinitely often while reading 
$a^{n_0} b^{n'_0} \dots a^{n_k} a^\omega$ and 
$t_k$ is the lowest priority visited infinitely often while reading 
$a^{n_0} b^{n'_0} \dots a^{n_k} b^{n'_k} b^\omega$.
There must be $i < j$, such that $t_i = t_j$. 
Let $u = a^{n_0} b^{n'_0} \dots b^{n'_i}$ and $v = b^{n'_{i+1}} \dots b^{n'_j}$, we have:
$$q_0 \xrightarrow{u} t_i \xrightarrow{a^{n_{i+1}}} s_{i+1} \xrightarrow{v} t_j = t_i$$
Consider the words $w = u \cdot (a^{n_{i+1}} \cdot v)^\omega$ and 
$$w^* = u \cdot (b^{p'_i} a^{n_i + p_i} v) \cdot (b^{2p'_i} a^{n_i + 2p_i} v) \dots (b^{kp'_i} a^{n_i + kp_i} v) \dots$$
$w$ must be accepted by $\A$ since it belongs to $L$. Hence $w^*$ is accepted as well, but does not belong to $L$. We have a contradiction, and the result follows.
\end{example}

\begin{theorem}\label{thrm_det_nondet} 
We have $\DFP \subset \NFP$.
\end{theorem}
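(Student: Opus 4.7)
The inclusion $\DFP \subseteq \NFP$ is immediate from the definitions: every deterministic automaton is a non-deterministic one. So the whole content of the theorem lies in showing that the inclusion is strict. For this my plan is to exhibit a witness language $L \in \NFP \setminus \DFP$, using exactly the language constructed in Example~\ref{examp3}, namely $L = L_1 \cup L_2$.

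First I would observe that $L_1, L_2 \in \DFP$ (each of them is a ``bounded $b$-blocks'' variant of the language of Example~\ref{examp1}, which was shown to be recognized by a deterministic finitary B\"uchi, hence parity, automaton). From this, membership $L \in \NFP$ follows by the simple non-deterministic construction that, at the initial state, guesses whether the input belongs to $L_1$ or to $L_2$, and then runs the corresponding deterministic finitary parity automaton. This gives $L_1 \cup L_2 \in \NFP$.

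The core of the proof is the separation $L \notin \DFP$, for which I would argue by contradiction. Assume $\A$ is a deterministic complete finitary parity automaton with $N$ states recognizing $L$. I would follow the inductive construction of Example~\ref{examp3}: since $a^\omega \in L$, the run on $a^\omega$ enters a cycle at some state $s_0$ with some looping word $a^{p_0}$ realizing the lowest priority visited infinitely often; since $a^{n_0} b^\omega \in L$, the run then enters a second cycle at a state $t_0$ with a looping word $b^{p'_0}$; continuing, we get an infinite sequence of pairs $(s_k, t_k)$ reached after reading $a^{n_0} b^{n'_0} \cdots a^{n_k} b^{n'_k}$. By pigeonhole on the $N$ states of $\A$, there exist $i < j$ with $t_i = t_j$, yielding a loop labeled $a^{n_{i+1}} b^{n'_{i+1}} \cdots a^{n_j} b^{n'_j}$ from $t_i$ back to itself.

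The final step is to exploit this loop to construct a word that would have to be accepted but lies outside $L$. I would use the small $a$- and $b$-loops around $s_i$ and $t_i$ to insert longer and longer bursts $b^{k p'_i}$ and $a^{n_i + k p_i}$ into successive iterations of the big loop, as in the word $w^*$ of Example~\ref{examp3}. The determinism of $\A$ forces the run on $w^*$ to match the run on $w = u \cdot (a^{n_{i+1}} \cdot v)^{\omega}$ at the synchronizing state $t_i = t_j$, so $w^* \in \La(\A) = L$; but in $w^*$ both the $a$-gaps and the $b$-gaps grow unboundedly, so $w^* \notin L_1$ and $w^* \notin L_2$, hence $w^* \notin L$, a contradiction. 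The main obstacle, and the delicate bookkeeping, is ensuring that the inserted subwords indeed return the automaton to the same states (hence exploiting the small loops $a^{p_i}$ and $b^{p'_i}$ given by the inductive construction), so that the run on $w^*$ can be shown to be accepting while the word itself violates the finitary condition of both $L_1$ and $L_2$.
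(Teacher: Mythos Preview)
Your proposal is correct and follows essentially the same approach as the paper: the theorem is proved via Example~\ref{examp3}, and you reproduce that argument faithfully---the witness language $L = L_1 \cup L_2$, the easy $\NFP$ membership by an initial non-deterministic guess, and the pumping-style contradiction (pigeonhole on the $t_k$'s, then interleaving the small $a^{p_k}$- and $b^{p'_k}$-loops into growing bursts to build $w^*$). Your identification of the delicate point---verifying that the inserted loops keep the run of $\A$ accepting while forcing both block-lengths to be unbounded---matches exactly where the paper's own exposition is terse.
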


Observe that Theorem~\ref{findet} does not hold for non-deterministic automata, since we have $\DP = \NP$ but $\DFP \subset \NFP$.

\subsection{Non-deterministic finitary automata}

We can show that non-deterministic finitary Streett automata can be 
reduced to non-deterministic finitary B\"uchi automata, and this 
would complete the picture of expressive power comparison.
We first show that non-deterministic finitary B\"uchi automata are 
closed under intersection, and use it to show Theorem~\ref{thrm_nsnb}.

\begin{lemma}\label{nfbconjunction}
$\NFB$ is closed under intersection.
\end{lemma}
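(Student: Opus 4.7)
The plan is to adapt the classical flag-based product construction for non-deterministic B\"uchi automata, and then to verify that the finitary bound propagates in both directions. Given two finitary B\"uchi automata $\A_1 = (Q_1, \Sigma, Q_0^1, \delta_1, F_1)$ and $\A_2 = (Q_2, \Sigma, Q_0^2, \delta_2, F_2)$, I construct the product $\A = (Q_1 \times Q_2 \times \set{1,2}, \Sigma, Q_0^1 \times Q_0^2 \times \set{1}, \delta, F)$ where the flag records which component is currently being ``waited for'': from $(q_1, q_2, 1)$ the flag flips to $2$ precisely when $q_1 \in F_1$, and from flag $2$ it flips back to $1$ precisely when $q_2 \in F_2$; the accepting set is $F = F_1 \times Q_2 \times \set{1}$. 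A run of $\A$ on a word $w$ is then exactly a pair of runs of $\A_1$ and $\A_2$ on $w$ zipped together with the deterministically generated flag.

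For the inclusion $\La(\A_1) \cap \La(\A_2) \subseteq \La(\A)$, I pick accepting runs $\rho_1$ and $\rho_2$ of the two components on a common word $w$, with bounds $B_i$ and thresholds $n_i$ such that $\nextk(\rho_i, F_i) \leq B_i$ for $k \geq n_i$. Zipping them yields a product run $\rho$ of $\A$ on $w$, and from any position $k \geq \max(n_1, n_2)$ the run needs at most $B_1$ steps to reach an $F_1$-state (flipping the flag to $2$), then at most $B_2$ further steps to reach an $F_2$-state (flipping back to $1$), then at most $B_1$ more steps to enter $F$, so $\nextk(\rho, F) \leq 2B_1 + B_2 + 2$, proving $w \in \La(\A)$. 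Conversely, given an accepting run $\rho$ of $\A$ on $w$ with $\nextk(\rho, F) \leq B$ for $k \geq n$, its projections $\rho_1, \rho_2$ are runs of $\A_1, \A_2$ on $w$; every $F$-visit in $\rho$ is an $F_1$-visit in $\rho_1$, so $\rho_1$ is accepting with bound $B$, and between any two consecutive $F$-visits in $\rho$ the flag must cycle $1 \to 2 \to 1$, forcing $\rho_2$ to visit $F_2$ at some intermediate position, so $F_2$-visits in $\rho_2$ are at most $2B$ apart from position $n$ onward.

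The main subtlety lies in the converse direction, where one must argue that an $F_2$-visit actually occurs \emph{between} each pair of consecutive $F$-visits of the product run, rather than merely once overall; this is precisely what the flag dynamics enforce, since the flag must return to $1$ before the next $F$-visit and it can only do so by witnessing an $F_2$-state. All remaining details mirror the standard B\"uchi intersection construction and are routine bookkeeping.
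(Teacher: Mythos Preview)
Your proof is correct and follows essentially the same approach as the paper: both use the standard flag-based synchronous product for B\"uchi intersection and then check that the finitary bounds transfer in each direction. The only cosmetic difference is that the paper takes the accepting set to be $F_1 \times Q_2 \times \{2\} \cup Q_1 \times F_2 \times \{1\}$ (both flip-points) rather than your single set $F_1 \times Q_2 \times \{1\}$, which yields the slightly tighter forward bound $\max\{B_1,B_2\}$ instead of your $2B_1 + B_2 + 2$, but this is immaterial.
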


\begin{proof}
Let $\A_1 = (Q_1,\Sigma,\delta_1,Q^1_0,F_1)$ and 
$\A_2 = (Q_2,\Sigma,\delta_2,Q^2_0,F_2)$ be two non-deterministic 
finitary B\"uchi automata.
Without loss of generality we assume both $\A_1$ and $\A_2$ to be complete. 
We will define a construction similar to the synchronous product construction,
where a switch between copies will happen while visiting $F_1$ or $F_2$.
The finitary B\"uchi automaton is
$\A = (Q_1 \times Q_2 \times \set{1,2},\Sigma,\delta,Q^1_0 \times Q^2_0 \times 
\set{1},F_1\times Q_2 \times\set{2} \cup Q_1 \times F_2 \times \set{1})$.
We define the transition relation $\delta$ below:
$$\begin{array}{rcl}
\delta & = & \set{((q_1,q_2,k),\sigma,(q'_1,q'_2,k)) \mid q'_1 \notin F_1, 
q_2' \notin F_2, (q_1,\sigma,q'_1) \in \delta_1, (q_2,\sigma,q'_2) \in \delta_2, 
k \in \set{1,2}} \\
& \cup & \set{((q_1,q_2,1),\sigma,(q'_1,q'_2,2)) \mid q'_1 \in F_1, (q_1,\sigma,q'_1) \in \delta_1, 
(q_2,\sigma,q'_2) \in \delta_2} \\
& \cup & \set{((q_1,q_2,2),\sigma,(q'_1,q'_2,1)) \mid q'_2 \in F_2, 
(q_1,\sigma,q'_1) \in \delta_1,
(q_2,\sigma,q'_2) \in \delta_2} 
\end{array}$$
Intuitively, the transition function $\delta$ is as follows:
the first component mimics the transition for automata $\A_1$, the 
second component mimics the transition for $\A_2$, and there is a switch 
for the third component from $1$ to $2$ visiting a state in $F_1$, 
and from $2$ to $1$ visiting a state in $F_2$.

We now prove the correctness of the construction.
Consider a word $w$ that is accepted by $\A_1$, and then 
there exists a bound  $B_1$ and a run $\rho_1$ in $\A_1$ such that eventually,
the number of steps between two visits to $F_1$ in $\rho_1$ is at most $B_1$;
and similarly,
there exists a bound  $B_2$ and a run $\rho_2$ in $\A_2$ such that eventually
the number of steps between two visits to $F_2$ in $\rho_2$ is at most $B_2$.
It follows that in our construction there is a run $\rho$ (that mimics
the runs $\rho_1$ and $\rho_2$) in $\A$ such that eventually within 
$\max\set{B_1, B_2}$ steps a state in 
$F_1 \times Q_2 \times \set{2} \cup Q_1 \times F_2 \times\set{1}$ is visited
in $\rho$. Hence $w$ is accepted by $\A$. 
Conversely, consider a word $w$ that is accepted by $\A$, and let $\rho$ be
a run and $B$ be the bound such that eventually between two visits to the accepting states
in $\rho$ is separated by at most $B$ steps.
Let $\rho_1$ and $\rho_2$ be the decomposition of the run $\rho$ in 
$\A_1$ and $\A_2$, respectively.
It follows that both in $\A_1$ and $\A_2$ the respective final states are 
eventually visited within at most $2\cdot B$ steps in $\rho_1$ and $\rho_2$,
respectively.
It follows that $w$ is accepted by both $\A_1$ and $\A_2$.
Hence we have $\LA = \La(\A_1) \cap \La(\A_2)$.
\hfill\qed
\end{proof}

\begin{theorem}\label{thrm_nsnb}
We have $\NFB = \NFP = \NFS$.
\end{theorem}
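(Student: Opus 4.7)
The non-trivial inclusions $\NFB \subseteq \NFP \subseteq \NFS$ follow by the standard encodings recalled in Section~2.2: priorities $0$ on $F$ and $1$ elsewhere turn a finitary B\"uchi condition into a finitary parity condition, and nested Streett pairs $G_1 \subseteq R_1 \subseteq G_2 \subseteq R_2 \subseteq \cdots$ turn a finitary parity condition into a finitary Streett condition. Both encodings touch only the acceptance component and preserve the transition structure, so every NFB (respectively NFP) automaton becomes an NFP (respectively NFS) automaton for the same language. The work therefore reduces to proving $\NFS \subseteq \NFB$.

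Given a non-deterministic finitary Streett automaton $\A = (Q, \Sigma, Q_0, \delta, (R_i, G_i)_{1 \leq i \leq d})$, the plan is to build a synchronous product $\B$ of $\A$ with a bookkeeping gadget, whose state space is $Q \times \{0,1\}^d \times \{1, \ldots, d\} \times \{0,1\}$. The four components respectively record the current state of $\A$, a vector of pending-request bits (with $b_i = 1$ iff an unanswered $R_i$ request is currently outstanding), a round-robin counter ranging over the $d$ pairs, and a flag that fires each time the counter has just completed a full cycle. On input letter $\sigma$, a transition picks some $(q,\sigma,q') \in \delta$ and deterministically updates the remaining components: set $b_i' = 0$ if $q' \in G_i$, set $b_i' = 1$ if $q' \in R_i \setminus G_i$, and otherwise keep $b_i' = b_i$; advance the counter from $i$ to $(i \bmod d) + 1$ whenever $b_i' = 0$, and otherwise leave it unchanged; set the new flag to $1$ precisely when the counter value just wrapped from $d$ back to $1$. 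Finally, let $F_\B$ be the set of all states whose flag equals $1$, and equip $\B$ with the finitary B\"uchi condition on $F_\B$.

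Since the bit, counter, and flag components are deterministic functions of the $\A$-component, runs of $\A$ lift uniquely to runs of $\B$, so accepting runs are in bijection. For any such pair $(\rho_\A, \rho_\B)$, the time between two consecutive visits to $F_\B$ in $\rho_\B$ equals the duration of a single round-robin cycle, which is mutually bounded with $\max_i \distk^i(\rho_\A, (R,G))$ up to a factor of $d$; hence $\limsup_k \nextk(\rho_\B, F_\B) < \infty$ if and only if $\limsup_k \distk(\rho_\A, (R,G)) < \infty$, giving $\La(\B) = \La(\A)$. The subtle point, and the main reason the flag is indispensable rather than simply taking $F_\B$ to be all states with counter value $1$, is that without a ``cycle completed'' certificate a run in which some $b_1$ stays forever pending would stall the counter at $1$ and trivially satisfy the finitary B\"uchi condition; the flag ensures that each visit to $F_\B$ witnesses the clearance of every pair and thereby matches the finitary Streett semantics exactly.
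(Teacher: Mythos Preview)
Your round-robin construction has a real gap in the direction ``finitary Streett accepts $\Rightarrow$ finitary B\"uchi accepts''. The issue is the case where some pair $j$ has only finitely many requests in the accepting run and the last of these is never granted. Finitary Streett is satisfied in this case, since $\distk^j(\rho_\A,(R,G))=0$ for all $k$ beyond the last $R_j$-position and hence the $\limsup$ is unaffected by the single value $\infty$ at that position. Your automaton, however, sets $b_j$ to $1$ at that last request and never clears it (no $G_j$ ever arrives), so the round-robin counter stalls once it reaches $j$ and the flag stops firing. Concretely, take a deterministic $\A$ with run $q_0 q_1 q_2 q_2 q_2\ldots$, $R_1=\{q_1\}$, $G_1=\emptyset$: the finitary Streett condition accepts (the $\limsup$ of the distance sequence is $0$), but in your $\B$ the bit $b_1$ becomes $1$ at time $1$ and never clears, so $F_\B$ is visited at most once. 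Your sentence ``mutually bounded with $\max_i \distk^i(\rho_\A,(R,G))$ up to a factor of $d$'' is exactly where the argument breaks: the pending bit $b_j$ remembers an old ungranted request indefinitely, whereas $\distk^j$ forgets it as soon as $k$ moves past the request position.

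The paper's proof sidesteps this by first using closure of $\NFB$ under intersection (Lemma~\ref{nfbconjunction}) to reduce to the single-pair case $d=1$, and then, crucially, equipping the single-pair gadget with a non-deterministic ``waiting'' copy (component~$1$ in the construction) in which the automaton may linger for an arbitrary finite prefix before committing to the request-tracking phase. This initial phase is precisely what absorbs the finitely many ungranted requests allowed by the $\limsup$. Your construction can be repaired in the same spirit: add an initial mode that merely simulates $\A$ with all bits forced to $0$, together with a non-deterministic one-way switch into the bookkeeping mode you describe. With that amendment the equivalence argument goes through; without it, the inclusion $\NFS\subseteq\NFB$ is not established.
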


\begin{proof}
We will present a reduction of $\NFS$ to $\NFB$ and the result will follow.
Since the Streett condition is a finite conjunction of conditions 
$\infi(w) \cap R_i \neq \emptyset \Rightarrow \infi(w) \cap G_i 
\neq \emptyset$,
by Lemma~\ref{nfbconjunction} it suffices to handle the special case when 
$d = 1$.
Hence we consider a non-deterministic Streett automaton
$\A = (Q,\Sigma,\delta,Q_0,(R,G))$ with $(R,G) = (R_1,G_1)$.
Without loss of generality we assume $\A$ to be complete.
We construct a non-deterministic B\"uchi automaton $\A'=(Q\times\set{1,2,3}, 
\Sigma, \delta',Q_0 \times \set{1}, Q \times \set{2})$, where the transition relation 
$\delta'$ is given as follows:
\[
\begin{array}{rcl}
\delta' & = & 
\set{(q,1),\sigma,(q',j) \mid (q,\sigma, q') \in \delta, j \in \set{1,2} } \\
& \cup & 
\set{(q,2),\sigma,(q',2) \mid q' \notin R_1, (q,\sigma, q') \in \delta } \\
& \cup & \set{(q,2),\sigma,(q',3) \mid q' \in R_1, (q,\sigma, q') \in \delta} \\
& \cup & \set{(q,3),\sigma,(q',3) \mid q' \notin G_1, (q,\sigma,q') \in \delta } \\
& \cup & \set{(q,3),\sigma,(q',2) \mid q' \in G_1, (q,\sigma, q') \in \delta} 
\end{array}
\]
In other words, the state component mimics the transition of $\A$, and in the 
second component:
(a)~the automaton can choose to stay in component $1$, or switch to $2$; 
(b)~there is a switch from $2$ to $3$ upon visiting a state in $R_1$; and 
(b)~there is a switch from $3$ to $2$ upon visiting a state in $G_1$.
Consider a word $w$ accepted by $\A$ and an accepting run $\rho$ in $\A$, and 
let $B$ be the bound on the distance sequence.
We show that $w$ is accepted by $\A'$ by constructing an accepting run $\rho'$ in $\A'$.
We consider the following cases: 
\begin{enumerate}
	\item If infinitely many requests $R_1$ are visited in $\rho$, then in $\A'$
immediately switch to component $2$, and then mimic the run $\rho$ as a 
run $\rho'$ in $\A'$.
It follows that from some point $j$ on every request is granted within $B$ 
steps, and it follows that after position $j$, whenever the second component 
is $3$, it becomes~$2$ within $B$ steps. Hence $w$ is accepted by $\A$.
	\item If finitely many requests $R_1$ are visited in $\rho$, then after some 
point $j$, there are no more requests.
The automaton $\A'$ mimics the run $\rho$ by staying in the second component 
as~$1$ for $j$ steps, and then switches to component~$2$.
Then after $j$ steps we always have the second component as~$2$, and hence
the word is accepted. 
\end{enumerate}
Conversely, consider a word $w$ accepted by $\A'$ and consider the accepting
run $\rho'$. We mimic the run in $\A$.
To accept the word $w$, the run $\rho'$ must switch to the second component 
as~$2$, say after $j$ steps.
Then, from some point on whenever a state with second
component $3$ is visited, within some bound $B$ steps a state with
second component $2$ is visited.
Hence the run $\rho$ is accepting in $\A$.
Thus the languages of $\A$ and $\A'$ coincide, and the desired
result follows.
\hfill\qed
\end{proof}

Our results are summarized in Corollary~\ref{sum} and shown in Fig~\ref{fig_automata}.

\begin{corollary}\label{sum}
We have (a)~$\DFB \not\subseteq \omegareg$;
(b)~$\DFB \subset \DFP = \DFS \subset \NFB = \NFP = \NFS$;
(c)~$\DB \not\subseteq \NFB$;
(d)~$\omegareg \not\subseteq \NFB$.
\end{corollary}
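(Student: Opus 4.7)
The plan is that Corollary~\ref{sum} is a consolidation of facts already proved or exemplified in the earlier parts of this section, so the proof is essentially a collection of pointers. I would present the four claims in the order (a), (c), (d), (b).

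For (a), I would cite Example~\ref{examp1}: the automaton of Fig.~\ref{auto1} is a deterministic finitary B\"uchi automaton whose language $L_B$ is argued there not to be $\omega$-regular (its complement contains no ultimately periodic word, whereas any non-empty $\omega$-regular language does). For (c), I would cite Example~\ref{examp2}: the language $L_I = \set{w \mid w \text{ has infinitely many } a\text{'s}}$ lies in $\DB$ trivially, and the pumping argument on the word $a b \, a b^2 \, a b^3 \cdots$ given there rules out $L_I \in \NFB$. Part (d) then comes for free: by Theorem~\ref{classical_aut} we have $\DB \subseteq \omegareg$, so the same witness $L_I$ also separates $\omegareg$ from $\NFB$.

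For (b), I would chain together the established theorems. The inclusions $\DFB \subseteq \DFP \subseteq \DFS$ are immediate because every B\"uchi condition is a special parity condition and every parity condition is a special Streett condition. The strict inclusion $\DFB \subsetneq \DFP$ and the equality $\DFP = \DFS$ follow from Theorem~\ref{findet} applied to the classical facts $\DB \subsetneq \DP = \DS$ of Theorem~\ref{classical_aut}, as flagged in the discussion following Theorem~\ref{findet}. The inclusion $\DFS \subseteq \NFS$ is trivial, and its strictness is obtained by combining Theorem~\ref{thrm_det_nondet} ($\DFP \subsetneq \NFP$) with Theorem~\ref{thrm_nsnb} ($\NFP = \NFS$). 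The triple equality $\NFB = \NFP = \NFS$ is exactly the content of Theorem~\ref{thrm_nsnb}.

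The only delicate point is the strict inclusion $\DFB \subsetneq \DFP$: strictness of the classical containment $\DB \subsetneq \DP$ does not automatically pass through the $\fin$ operator, so one must verify that for a suitable classical witness $L \in \DP \setminus \DB$ the language $\fin(L)$ is indeed not in $\DFB$. I would carry this out with a pumping-style argument analogous to Example~\ref{examp2} applied to a stretched version of the classical witness; the rest of the corollary is purely bookkeeping.
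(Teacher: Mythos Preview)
Your proposal is correct and matches the paper's treatment: Corollary~\ref{sum} is presented there without proof, purely as a summary of the preceding results, and the pointers you give (Examples~\ref{examp1} and~\ref{examp2} for (a), (c), (d); Theorems~\ref{findet}, \ref{thrm_det_nondet}, \ref{thrm_nsnb} for (b)) are exactly the intended supporting facts. You are in fact more careful than the paper on the strictness $\DFB \subsetneq \DFP$: the paper simply asserts after Theorem~\ref{findet} that this follows ``as a corollary,'' while you rightly observe that strictness does not transfer through $\fin$ automatically and flag it as requiring a separate argument.
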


\subsection{Closure properties}
 
\begin{theorem}[Closure properties]\label{closure} The following closure properties hold:
\begin{enumerate}
\vspace{-2.5mm}
\item $\DFP$ is closed under intersection.
\item $\DFP$ is not closed under union.
\item $\NFP$ is closed under union and intersection.
\item $\DFP$ and $\NFP$ are not closed under complementation.
\end{enumerate}
\end{theorem}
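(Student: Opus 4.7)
The plan is to prove the four items in turn using the machinery developed in the preceding sections.

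For item~1, I would use the characterization $\DFP = \set{\fin(M) \mid M \in \omegareg}$, which follows from Theorem~\ref{findet} together with the corollary preceding Example~\ref{examp3}. The key step is the identity $\fin(M_1) \cap \fin(M_2) = \fin(M_1 \cap M_2)$: in the forward direction, if $N_i$ is a closed $\omega$-regular witness for $w \in \fin(M_i)$, then $N_1 \cap N_2$ is a closed $\omega$-regular subset of $M_1 \cap M_2$ containing $w$; the backward direction is immediate. Since $\omegareg$ is closed under intersection, the result follows. Item~2 is handled directly by Example~\ref{examp3}, which already exhibits languages $L_1, L_2 \in \DFP$ whose union does not belong to $\DFP$.

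For item~3, I would handle union via the disjoint-union construction: given two $\NFP$ automata, take the state set $Q_1 \sqcup Q_2$, merge the initial-state sets, and inherit priorities componentwise. Every run stays inside a single component, so acceptance corresponds to acceptance in one of the original automata. For intersection I would first invoke Theorem~\ref{thrm_nsnb} to pass from $\NFP$ to $\NFB$, and then apply Lemma~\ref{nfbconjunction}.

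For item~4, the plan is a topological argument. The core step is to show $\NFP \subseteq \Sig{2}$: by Theorem~\ref{thrm_nsnb} it suffices to treat $\NFB$, and for an $\NFB$ automaton $\A$ with accepting set $F$ one writes
$$\La(\A) = \bigcup_{B,n \in \N} L_{B,n}, \qquad L_{B,n} = \set{w \mid \text{there is a run } \rho \text{ of } \A \text{ on } w \text{ with } \nextk(\rho,F) \leq B \text{ for all } k \geq n}.$$
Each $L_{B,n}$ is closed by a K\"onig-style compactness argument on the compact space $Q^\omega$: given a convergent sequence of words with witness runs, extract by diagonalization a limit run on the limit word, and observe that the uniform bound $B$ passes to the limit because each condition $\nextk(\rho,F) \leq B$ depends only on finitely many positions of $\rho$. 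Once $\NFP \subseteq \Sig{2}$ is established, pick any $\emptyset \subset F \subset \Sigma$; the language $\fbucf$ is recognized by a simple deterministic tracker automaton with one state per letter of $\Sigma$, and thus lies in $\DFB \subseteq \DFP \subseteq \NFP$. By Theorem~\ref{thrm_topo} it is $\Sig{2}$-complete, so its complement is $\Pi_2$-complete and hence does not belong to $\Sig{2}$; in particular it lies neither in $\DFP$ nor in $\NFP$.

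The main obstacle is the closedness of $L_{B,n}$ in item~4, which requires carefully combining the compactness of $Q^\omega$ with the continuity of the run-on-word relation to extract a limit run still witnessing the uniform bound $B$. The other items reduce cleanly to earlier theorems together with standard product or disjoint-union constructions.
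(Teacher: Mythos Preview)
Your plan for items~1--3 is essentially the paper's own proof: the identity $\fin(M_1)\cap\fin(M_2)=\fin(M_1\cap M_2)$ combined with Theorem~\ref{findet} for item~1, Example~\ref{examp3} for item~2, and the disjoint-union construction plus Lemma~\ref{nfbconjunction} via $\NFP=\NFB$ for item~3.

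For item~4 you take a genuinely different route. The paper argues separately for $\DFP$ and $\NFP$: for $\DFP$ it observes that closure under complementation together with item~1 would yield closure under union via De~Morgan, contradicting item~2; for $\NFP$ it simply invokes Example~\ref{examp2}, noting that the language ``finitely many $a$'s'' lies in $\NFP$ while its complement ``infinitely many $a$'s'' was shown there not to lie in $\NFB=\NFP$. Your approach instead establishes the new inclusion $\NFP\subseteq\Sig{2}$ by a K\"onig/compactness argument on runs, and then uses the $\Sig{2}$-completeness of $\fbucf$ (Theorem~\ref{thrm_topo}) together with the strictness of the Borel hierarchy to conclude that its complement, being $\Pi_2$-complete, cannot lie in $\Sig{2}$ and hence in neither $\DFP$ nor $\NFP$. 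Your argument is correct and has the merit of treating both classes uniformly while producing the useful side result $\NFP\subseteq\Sig{2}$; the paper's argument is more elementary and fully self-contained, since it recycles Example~\ref{examp2} and the already-proved items~1--2 without appealing to the separation $\Sig{2}\neq\Pi_2$.
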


\begin{proof}
We prove all the cases below.
\begin{enumerate}
	\item Intersection closure for $\DFP$ follows from Theorem~\ref{findet} 
and from the observation that for all $L,L' \subseteq \So$ we have 
$\fin(L \cap L') = \fin(L) \cap \fin(L')$.
The observation is proved as follows. 
Let $M \in \Pi_1 \cap \omegareg$ and $M \subseteq L \cap L'$, 
then $M \subseteq \fin(L) \cap \fin(L')$, and
hence $\fin(L \cap L') \subseteq \fin(L) \cap \fin(L')$.
Conversely, let $M_1 \subseteq \fin(L)$ and $M_2 \subseteq \fin(L')$, then 
$M_1 \cap M_2 \in \Pi_1 \cap \omegareg$ and $M_1 \cap M_2 \subseteq L \cap L'$.
Hence $M_1 \cap M_2 \subseteq \fin(L \cap L')$, 
thus $\fin(L) \cap \fin(L') \subseteq \fin(L \cap L')$.

	\item Failure of closure under union for $\DFP$ follows from Example~\ref{examp3}.
	
	\item Union closure for $\NFP$ is easy and relies on non-determinism, while intersection closure follows from Lemma~\ref{nfbconjunction}, since $\NFP = \NFB$. 

	\item Failure of closure under complementation for $\DFP$ follows from items 1. and 2., since this closure together with intersection closure would imply union closure.
Failure of closure under complementation for $\NFP$ follows from Example~\ref{examp2}. Indeed, the language $L_F = \{a,b\}^{\omega}\ \backslash L_I = \{ w \mid w \mbox{ has a finite number of } a\}$ lies in $\NFP$; however, Example~\ref{examp2} shows that its complement is not expressible by non-deterministic finitary B\"uchi automata, hence nor by non-deterministic finitary parity automata.
\end{enumerate}
The result follows.
\hfill\qed
\end{proof}

\begin{figure}
\centering
\begin{tikzpicture}
   \draw[rotate=90] (0,0) ellipse (1cm and 2cm);
   \draw[rotate=90] (0,0) ellipse (1.5cm and 3cm);
   \draw[rotate=90] (0,0) ellipse (2cm and 4cm);

   \draw[rotate=60] (.4,1.8) ellipse (1cm and 1.5cm);

   \draw[rotate=90,color=gray,line width=2.5pt] (1.2,0) ellipse (1.5cm and 3.4cm);
   \draw (0,-0.7) node {$\DFB$}; 
   \draw (0,-1.2) node {$\DFP = \DFS$};
   \draw (0,-1.7) node {$\NFB = \NFP = \NFS$};  
   \draw (-2,2) node {$\DB$};
   \draw (0,2.4) node {$\omegareg$};  
\end{tikzpicture}
\caption{Expressive power classification}\label{fig_automata}
\end{figure}
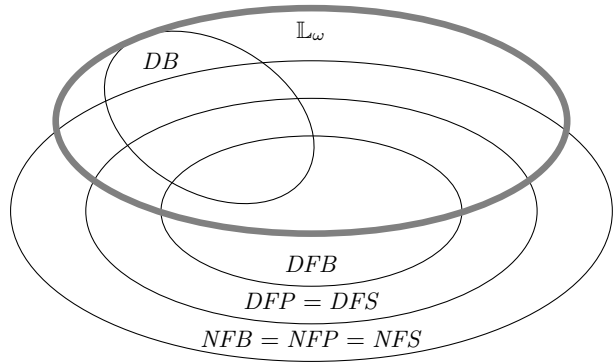

\section{Regular Expression Characterization}

In this section we address the question of giving a syntactical representation of finitary languages, using a special class of regular expressions.

The class of $\omega B$-regular expressions was introduced in the work 
of~\cite{BC06-LICS} as an extension of $\omega$-regular expressions, as an attempt to express bounds in regular languages.
To define $\omega B$-regular expressions, we need regular expressions and $\omega$-regular expressions.

Regular expressions define regular languages over finite words, and
have the following grammar:
$$L := \emptyset \mid \varepsilon \mid \sigma \mid L \cdot L \mid L^* \mid L + L; \quad \sigma \in \Sigma$$
In the above grammar, $\cdot$ stands for concatenation, $*$ for Kleene star and $+$ for union.
Then $\omega$-regular languages are finite union of $L \cdot L'^\omega$,
where  $L$ and $L'$ are regular languages of finite words.
The class of $\omega B$-regular languages, as defined in~\cite{BC06-LICS}, 
is described by finite union of $L \cdot M^\omega$, 
where $L$ is a regular language over finite words 
and $M$ is a $B$-regular language over infinite sequences of finite words.
The grammar for $B$-regular languages is as follows:
$$M := \emptyset \mid \varepsilon \mid \sigma \mid M \cdot M \mid M^* \mid M^B \mid M + M; \quad \sigma \in \Sigma$$
The semantics of regular languages over infinite sequences of finite words 
will assign to a $B$-regular expression $M$, a language in $(\Sigma^*)^\omega$.
The infinite sequence $\seq{u_0,u_1,\dots}$ will be denoted by $\vec{u}$.
The semantics is defined by structural induction as follows.
\begin{itemize}
\vspace{-2.5mm}
  \item $\emptyset$ is the empty language,
  \item $\varepsilon$ is the language containing the single sequence $(\varepsilon,\varepsilon,\dots)$,
  \item $a$ is the language containing the single sequence $(a,a,\dots)$,
  \item $M_1 \cdot M_2$ is the language $\set{\seq{u_0 \cdot v_0, u_1 \cdot v_1,\dots} \mid \vec{u} \in M_1, \vec{v} \in M_2}$,
  \item $M^*$ is the language $\set{\seq{u_1 \dots u_{f(1) - 1}, u_{f(1)} \dots u_{f(2) - 1}, \dots} \mid \vec{u} \in M, f : \N \to \N}$,
  \item $M^B$ is defined like $M^*$ but we additionally require the values $f(i+1) - f(i)$ to be bounded uniformly in $i$,
  \item $M_1 + M_2$ is $\set{\vec{w} \mid \vec{u} \in M_1 ,\vec{v} \in M_2, \forall i, w_i \in \set{u_i,v_i}}$.
\end{itemize}

\noindent Finally, the $\omega$-operator on sequences with nonempty words on infinitely many coordinates is: $\seq{u_0,u_1,\dots}^\omega = u_0 u_1 \dots$.
This operation is naturally extended to languages of sequences by taking the 
$\omega$ power of every sequence in the language.
The class of $\omega B$-regular languages is more expressive than $\NFB$, 
and this is due to the $*$-operator.
We will consider the following fragment of $\omega B$-regular languages where we do not use the $*$-operator for $B$-regular expressions (however, the $*$-operator is allowed for $L$, regular languages over finite words).
We call this fragment the star-free fragment of $\omega B$-regular languages.
In the following two lemmas we show that star-free $\omega B$-regular 
expressions express exactly $\NFB$.

\begin{lemma}\label{lemm_wb1}
All languages in $\NFB$ can be described by a star-free $\omega B$-regular expression.
\end{lemma}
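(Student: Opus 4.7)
My plan is to build a star-free $\omega B$-regular expression from an NFB automaton $\A = (Q, \Sigma, Q_0, \delta, F)$ by decomposing $\LA$ along the automaton's structure and then encoding each piece with the star-free $B$-regular operators. First, I decompose
$$\LA = \bigcup_{q_0 \in Q_0,\ f \in F} R_{q_0, f} \cdot L_f,$$
where $R_{q_0, f}$ is the classical regular expression (allowing $*$) for non-empty finite words labelling a run in $\A$ from $q_0$ to $f$, obtained by standard state elimination on the underlying NFA; and $L_f$ is the set of infinite words admitting a run from $f$ whose $F$-gaps are uniformly bounded from some point on. I further decompose each $L_f$ using the ``$F$-skeleton'' $\A_F$, whose vertices are $F$ and whose transitions $g \xrightarrow{K_{g,g'}} g'$ are labelled by the classical regular expression $K_{g,g'}$ for non-empty words labelling a run in $\A$ from $g$ to $g'$ that visits $F$ only at the endpoints. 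A word $w$ lies in $L_f$ exactly when $w = v_1 v_2 v_3 \cdots$ for some infinite walk $f = g_0, g_1, g_2, \dots$ in $\A_F$ with $v_i \in K_{g_{i-1}, g_i}$ and $\sup_i |v_i| < \infty$.

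Next, I introduce the \emph{bounded lift} $\widetilde{K}$ of a regular expression $K$, obtained by substituting $^B$ for every $^*$ in $K$. A structural induction on $K$ shows that $\widetilde{K}$, viewed as a star-free $B$-regular expression, denotes exactly the set of sequences $(u_i)$ with $u_i \in K$ and $\sup_i |u_i| < \infty$; the base and $+,\cdot$ cases are immediate, and the $*$ case becomes the $B$-operator clump, whose uniform bound is precisely what is needed. Using the lifted transitions $\widetilde{K}_{g,g'}$, I construct a star-free $B$-regular expression $M_f$ with $L_f = M_f^\omega$ by a state-elimination argument adapted to $\A_F$: for each strongly connected component $S$ of $\A_F$ reachable from $f$ and each pivot $f^\star \in S$, I assemble a star-free $B$-regex for non-empty first-return loops from $f^\star$ through $S$ (combining the lifted edge labels along $S$-internal walks by $+$ and $\cdot$), then combine these with prefix regular expressions reaching $f^\star$ and take the disjunction over choices of $S$ and $f^\star$.

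The main obstacle I anticipate is enforcing the chaining $g_i = g'_{i+1}$ of successive block transitions, since the $+$ operator on $B$-regular expressions permits only per-coordinate choice between two fixed sequences, not context-dependent branching. I expect this to be resolved by organising the construction around pivot states: a first-return loop at $f^\star$ starts and ends at $f^\star$ by definition, so chaining across block boundaries in the $\omega$-concatenation is automatic, while the uniform bound supplied by $^B$ matches the finitary B\"uchi bound on $|v_i|$. Taking the disjunction over $(q_0, f)$, over the recurrent SCC, and over the pivot then yields the desired star-free $\omega B$-regular expression for $\LA$.
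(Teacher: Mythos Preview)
Your plan shares the paper's overall skeleton---write $\LA$ as a finite union $\bigcup L_{q_0,q}\cdot(M_q)^\omega$ over a recurrent accepting state $q$, with $L_{q_0,q}$ ordinary regular and $M_q$ star-free $B$-regular---but takes a longer route to $M_q$. The paper runs the McNaughton--Yamada recursion \emph{directly on $\A$}: with $Q=\{1,\dots,n\}$ it lets $M^k_{q,q'}$ be the set of bounded-length sequences of words taking $q$ to $q'$ through intermediate states in $\{1,\dots,k\}$, proves the classical identity
\[
M^k_{q,q'}\;=\;M^{k-1}_{q,k}\cdot\bigl(M^{k-1}_{k,k}\bigr)^{B}\cdot M^{k-1}_{k,q'}\;+\;M^{k-1}_{q,q'}
\]
with $^B$ in place of $^*$, and sets $M_q=M^n_{q,q}$. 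Your ``bounded lift'' $\widetilde K$ (replace every $^*$ by $^B$) is essentially this same recursion restated for each edge language $K_{g,g'}$, so the detour through the $F$-skeleton $\A_F$ buys nothing: one Kleene pass over the full state set already delivers the required $B$-regular expressions, without forming $\A_F$ or invoking SCCs and pivots at all.

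Your sketch also contains a concrete gap. You claim the first-return loops at a pivot $f^\star$ through an SCC $S$ can be assembled from the lifted edge labels ``by $+$ and $\cdot$''. That holds only when $S\setminus\{f^\star\}$ is acyclic; in general $S\setminus\{f^\star\}$ has cycles, there are infinitely many first-return walks, and eliminating the internal states forces a $^B$---just as ordinary state elimination introduces $^*$. Once that $^B$ is in place, note what it actually bounds: the outer $^B$ (at the $\A_F$ level) bounds the \emph{number of $\A_F$-edges} per return block, while the inner $^B$'s inside each $\widetilde K_{g,g'}$ bound the individual $|v_i|$; together they bound the \emph{total length} of each return block, not merely the $|v_i|$ as your last paragraph asserts. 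This is exactly the quantity the paper's $M_q$ bounds, so you land in the same place---but your justification (``the uniform bound supplied by $^B$ matches the finitary B\"uchi bound on $|v_i|$'') misidentifies which bound is being imposed and would need to be argued differently.
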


\begin{proof}
Let $\A = (Q,\Sigma,\delta,Q_0,F)$ be a non-deterministic finitary 
B\"uchi automaton. 
Without loss of generality we assume $Q = \set{1,\dots,n}$.
Let $L_{q,q'} = \set{u \in \Sigstar \mid q \xrightarrow{u} q'}$ 
and $M^{\geq c}_q = \set{\vec{u} \mid (|u_i|)_i \mbox{ is bounded and } 
\forall i, q \xrightarrow{u_i} q}$.
Then $$\LA = \bigcup_{q_0 \in Q_0, q \in F} L_{q_0,q} \cdot (M_q)^\omega.$$ 
For all $q,q' \in Q$ we have $L_{q,q'} \subseteq \Sigstar$ is regular. 
We now show that for all $q \in Q$ the language $M_q$ is $B$-regular.
For all $0 \leq k \leq n$ and  $q,q' \in Q$, 
let $M^k_{q,q'} = \set{\vec{u} \mid (|u_i|)_i \mbox{ is bounded and } 
\forall i, q \xrightarrow{u_i} q' \mbox{ where all intermediate visited states are from } 
\set{1,\dots,k}}$.
We show by induction on $0 \leq k \leq n$ that for all $q,q' \in Q$ 
the language $M^k_{q,q'}$ is $B$-regular.
The base case $k = 0$ follows from observation:
$$M^0_{q,q'} = \left \{ \begin{array}{ll}
a_1 + a_2 + \dots + a_l & \mbox{if } q \neq q' \mbox{ and } (q,a,q') \in \delta \iff \exists i \in \set{1,\dots,l}, a = a_i \\
\varepsilon + a_1 + a_2 + \dots + a_l & \mbox{if } q = q' \mbox{ and } (q,a,q') \in \delta \iff \exists i \in \set{1,\dots,l}, a = a_i \\
\emptyset & \mbox{otherwise}
\end{array} \right.$$
The inductive case for $k > 0$ follows from observation:
$$M^k_{q,q'} = M^{k-1}_{q,k} \cdot (M^{k-1}_{k,k})^B \cdot M^{k-1}_{k,q'} + M^{k-1}_{q,q'}$$
Since $M^n_{q,q} = M_{q}$, we conclude that $\LA$ is described by a star-free $\omega B$-regular expression.
\hfill\qed
\end{proof}

\begin{lemma}\label{lemm_wb2}
All languages described by a star-free $\omega B$-regular expression 
is recognized by a non-deterministic finitary B\"uchi automaton.
\end{lemma}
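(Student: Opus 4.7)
The plan is to exhibit, for every star-free $\omega B$-regular expression, an equivalent non-deterministic finitary B\"uchi automaton. Such an expression is a finite union $\bigcup_j L_j \cdot M_j^\omega$ where each $L_j$ is a regular language of finite words and each $M_j$ is a star-free $B$-regular expression. Since $\NFB$ is closed under finite union (by Theorem~\ref{closure} together with $\NFB = \NFP$ from Theorem~\ref{thrm_nsnb}), and prepending an NFA for $L$ in front of an $\NFB$ for $M^\omega$ is a routine construction that preserves the finitary B\"uchi condition on the suffix, the task reduces to showing $M^\omega \in \NFB$ for every star-free $B$-regular $M$. Observing that any occurrence of $\emptyset$ collapses the entire expression to $\emptyset$, I may further assume $\emptyset$ does not appear in $M$.

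The key inductive claim is that for every such $M$ there is an effectively computable regular language $R_M \subseteq \Sigma^*$ satisfying
$$M = \set{\vec{w} \in (\Sigma^*)^\omega \mid \forall i,\ w_i \in R_M \text{ and } (|w_i|)_i \text{ is uniformly bounded}}.$$
I would prove this by structural induction, taking $R_\varepsilon = \set{\varepsilon}$, $R_a = \set{a}$, $R_{M_1 \cdot M_2} = R_{M_1} \cdot R_{M_2}$, $R_{M_1 + M_2} = R_{M_1} \cup R_{M_2}$, and $R_{M^B} = (R_M)^*$. The $(\subseteq)$ inclusions follow immediately from the semantic definitions, using that concatenations and groupings of bounded-many bounded-length words remain uniformly bounded. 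For $(\supseteq)$ in the concatenation and $^B$ cases, one decomposes each $w_i$ according to the inductive structure and reassembles the pieces into a single sequence inside the sub-expressions. The subtle step, and the main obstacle of the proof, is the $(\supseteq)$ direction for $M_1 + M_2$, since $+$ is a coordinate-wise selection between two sequences rather than a set-theoretic union of sequence languages: given $\vec{w}$ with $w_i \in R_{M_1} \cup R_{M_2}$ bounded, I fix arbitrary default sequences $\vec{u}^\star \in M_1$ and $\vec{v}^\star \in M_2$ (available by the $\emptyset$-free assumption) and set $u_i = w_i$ whenever $w_i \in R_{M_1}$, otherwise $u_i = u^\star_i$ (symmetrically for $\vec{v}$); uniform boundedness of $\vec{u}$ then follows from the bounds on $\vec{w}$ and $\vec{u}^\star$.

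Given the inductive description, $M^\omega$ is the language of infinite concatenations of uniformly bounded-length words from $R_M$. To build $\A_M$, I would take an NFA $\B = (Q_\B, \Sigma, s, \delta_\B, F_\B)$ for $R_M \setminus \set{\varepsilon}$ and define $\A_M$ to have state set $Q_\B$, initial state $s$, transition relation $\delta_\B \cup \set{(q,a,q') \mid q \in F_\B,\ (s,a,q') \in \delta_\B}$ (the extra transitions allow a non-deterministic reset from a final state to start reading a fresh $R_M$-piece), and finitary B\"uchi acceptance set $F_\B$. Accepting runs correspond to decompositions of the input into non-empty $R_M$-pieces, and the distance between two successive visits to $F_\B$ along the run equals the length of the intervening piece, so the finitary B\"uchi condition translates exactly into uniform boundedness of piece lengths. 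The restriction to $R_M \setminus \set{\varepsilon}$ is sound because the $\omega$-operator depends only on the infinitely many non-empty coordinates of a sequence.
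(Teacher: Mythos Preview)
Your reduction to a single regular language $R_M$ is correct and is a genuine simplification over the paper's approach: the paper never observes that every star-free $B$-regular language has the uniform shape ``all coordinates in a fixed regular set, with bounded lengths'', and instead builds automata for $M^B$ by a direct structural induction on $M$, with fairly intricate product-and-guess constructions for the $+$ and $\cdot$ cases. Your characterisation collapses all of that to one NFA for $R_M$, which is nicer.

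However, the final automaton $\A_M$ is not correct as written. The problem is that $\B$ may contain $\delta_\B$-cycles passing through $F_\B$; a run of $\A_M$ can then loop on such a cycle forever, visiting $F_\B$ with bounded gaps and hence satisfying the finitary B\"uchi condition, without ever using a reset transition --- so the input need not decompose into $R_M$-pieces at all. Concretely, take $M = a\cdot(b\cdot c)^B$, so $R_M = a(bc)^*$, and let $\B$ be the obvious three-state NFA with $s\xrightarrow{a}f$, $f\xrightarrow{b}g$, $g\xrightarrow{c}f$ and $F_\B=\{f\}$. Your $\A_M$ only adds the reset $f\xrightarrow{a}f$. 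The word $a(bc)^\omega$ has the run $s,f,g,f,g,\ldots$, which visits $F_\B$ every two steps and is therefore accepted; yet $a(bc)^\omega\notin M^\omega$, since every word in $M^\omega$ contains infinitely many $a$'s. Your sentence ``the distance between two successive visits to $F_\B$ equals the length of the intervening piece'' conflates visits to $F_\B$ with resets.

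The fix is easy and keeps your overall strategy intact: choose $\B$ in a normal form where the (unique) final state has no outgoing $\delta_\B$-transitions and the initial state has no incoming ones (always possible for an NFA recognising a language not containing $\varepsilon$). Then every visit to $F_\B$ \emph{forces} the next step to be a reset, so accepting runs of $\A_M$ are exactly decompositions into non-empty $R_M$-pieces, and the finitary B\"uchi bound is precisely a bound on piece lengths, as you intended.
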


\begin{proof}
To prove this result, we will describe automata reading infinite sequences of 
finite words, and corresponding acceptance conditions.
Let $\A = (Q,\Sigma,\delta,Q_0,F)$ a finitary B\"uchi automaton. While reading an infinite sequence $\vec{u}$ of finite words,
$\A$ will accept if the following conditions are satisfied:
(1)~$\exists q_0 \in Q_0, q_1,q_2,\ldots \in F, \forall i \in \N$, we have $q_i \xrightarrow{u_i} q_{i+1}$
and (2)~$(|u_n|)_n$ is bounded.

We show that for all $M$ star-free $B$-regular expression, 
there exists a non-deterministic finitary B\"uchi automaton accepting $M^B$,
language of infinite sequence of finite words, as described above.
We proceed by induction on $M$.
\begin{itemize}
 \item The cases $\emptyset, \varepsilon$ and $a \in \Sigma$ are easy.
 \item From $M$ to $M^B$, the same automaton for $M$ works for $M^B$ as well, since $B$ is idempotent.
 \item From $M_1,M_2$ to $M_1 + M_2$: this involves non-determinism. 
The automaton guesses for each finite word which word is used.
Let $\A_1 = (Q_1,\Sigma, \delta_1,Q_1^0,F_1)$ and $\A_2 = (Q_2,\Sigma, \delta_2,Q_2^0,F_2)$
two non-deterministic finitary B\"uchi automata accepting $M_1^B$ and $M_2^B$, respectively.
For $k \in \set{1,2}$ and $T \subseteq Q_k$, we define 
$\Final(T) = \set{q' \in F_k \mid \exists q \in T, \exists u \in \Sigstar, q \xrightarrow{u}_{\A_k} q'}$
to be the state of final states reachable from a state in $T$. 
We denote by $\Final^k$ the $k$-th iteration of $\Final$, e.g., 
$\Final^3(T)=\Final(\Final(\Final(T)))$.

We define a finitary B\"uchi automaton:
$$\A = (\underbrace{(Q_1 \times 2^{Q_1}) \cup (Q_2 \times 2^{Q_1})}_{\mbox{computation states}} \ 
\cup \ \underbrace{2^{Q_1} \times 2^{Q_2} }_{\mbox{guess states}}, \Sigma, \delta, (Q_1^0,Q_2^0), F)$$
where 
$$\begin{array}{rclr}
\delta & = & \set{((Q,Q'),\varepsilon,(q,\Final(Q'))) \mid q \in Q} & 
\quad (\mbox{guess is } 1) \\
& \cup & \set{((Q,Q'),\varepsilon,(q',\Final(Q))) \mid q' \in Q'} & 
\quad (\mbox{guess is } 2) \\
& \cup & \set{((q,T),\sigma,(q',T)) \mid (q,\sigma,q') \in \delta_1 \cup \delta_2} \\
& \cup & \set{((q_1,T),\varepsilon,(\set{q_1},T)) \mid q_1 \in F_1} \\
& \cup & \set{((q_2,T),\varepsilon,(T,\set{q_2})) \mid q_2 \in F_2} \\
\end{array}$$
There are two kinds of states.
Computation states are $(q,T)$ where $q \in Q_1$ and $T \subseteq Q_2$ 
(or symmetrically $q \in Q_2$ and $T \subseteq Q_1$), where $q$ is the current 
state of the automaton that has been decided to use for the current finite 
word, and $T$ is the set of final states of the other automaton that would 
have been reachable if one had chosen this automaton.
Guess states are $(Q,Q')$, 
where $Q$ is the set of states from $\A_1$ one can start reading the next word,
and similarly for $Q'$.

We now prove the correctness of our construction.
Consider an infinite sequence $\vec{w}$ accepted by $\A$, and consider 
an accepting run $\rho$.
There are three cases:
\begin{enumerate}
 \item either all guesses are $1$;
 \item or all guesses are $2$;
 \item else, both guesses happen.
\end{enumerate}
The first two cases are symmetric. In the first, we can easily see that 
$\vec{w}$ is accepted by $\A_1$, and similarly in the second
$\vec{w}$ is accepted by $\A_2$.

We now consider the third case.
There are two symmetric subcases: either the first guess is $1$, then 
\[
\rho = (Q_1^0, Q_2^0) \cdot (q^0_1,\Final(Q_2^0)) \dots,
\]
with $q^0_1 \in Q_1^0$; or symmetrically the first guess is $2$, then
\[
\rho = (Q_1^0,Q_2^0) \cdot (q^0_2,\Final(Q_1^0)) \dots,
\]
with $q^0_2 \in Q_2^0$.
We consider only the first subcase. Then 
\[
\rho = (Q_1^0,Q_2^0) \cdot (q^0_1,\Final(Q_2^0)) \dots (q^1_1,\Final(Q_2^0)) \cdot (\set{q^1_1},\Final(Q_2^0)) \dots,
\]
where $u_0$ is a finite prefix of $\vec{w}^\omega$ such that 
$q^0_1 \xrightarrow{u_0} q^1_1$ in $\A_1$ and $q^1_1 \in F_1$.
We denote by $\rho_0$ the finite prefix of $\rho$ up to $(q^1_1,\Final(Q_2^0))$.
Let $k$ be the first time when guess is $2$: then 
\[
\rho = \rho_0 \cdot \rho_1 \cdot \rho_{k-1} \cdot (\set{q^k_1},\Final^k(Q_2^0)) \cdot (q^0_2,\Final(\set{q_k})) \dots,
\]
where $q^0_2 \in \Final^k(Q_2^0)$ and for $1 \leq i \leq k-1$, we have
\[
\rho_i = (\set{q^i_1},\Final^i(Q_2^0)) \cdot (q^i_1,\Final^{i+1}(Q_2^0)) \dots (q^{i+1}_1,\Final^{i+1}(Q_2^0)),
\]
and $u_i$ is a finite word such that $q^i_1 \xrightarrow{u_i} q^{i+1}_1$ in $\A_1, q^{i+1}_1 \in F_1$ and
$u_0 u_1 \dots u_{k-1}$ finite prefix of $\vec{w}^\omega$.
Since $q^0_2 \in \Final^k(Q_2^0)$, there exists $v_0,v_1,\dots,v_{k-1}$ finite words 
and $q^1_2,\dots,q^k_2 \in F_2$ such that:
$q^0_2 \xrightarrow{v_0} q^1_2 \xrightarrow{v_1} \dots \xrightarrow{v_{k-1}} q^k_2$.
Then we can repeat this by induction, constructing $\vec{u} \in M_1^B$ and $\vec{v} \in M_2^B$, such that
for all $i \in \N$, we have $w_i \in \set{u_i,v_i}$.

Conversely, let $\vec{u} \in M_1^B$ and $\vec{v} \in M_2^B$, and $\vec{w}$ such that
$\forall i \in \N, w_i \in \set{u_i,v_i}$. Using $\A_1$ when $w_i = u_i$ and $\A_2$ otherwise,
one can construct an accepting run for $\vec{w}$ and $\A$.
Hence $\A$ recognizes $(M_1 + M_2)^B$.

 \item From $M_1,M_2$ to $M_1 \cdot M_2$: 
the automaton keeps tracks of pending states while reading the other word.
Let $\A_1 = (Q_1,\Sigma, \delta_1,Q_1^0,F_1)$ and $\A_2 = (Q_2,\Sigma, \delta_2,Q_2^0,F_2)$
two non-deterministic finitary B\"uchi automata accepting $M_1^B$ and $M_2^B$, respectively.
Let $\A = ((Q_1 \times F_2) \cup (Q_2 \times F_1), \Sigma, \delta, Q_1^0 \times Q_2^0, F_1 \times F_2)$, 
where $$\begin{array}{rcl}
\delta & = & \set{((q,f),\sigma,(q',f)) \mid (q,\sigma,q') \in \delta_1, f \in F_2} \\
& \cup & \set{((q,f),\sigma,(q',f)) \mid (q,\sigma,q') \in \delta_2, f \in F_1} \\
& \cup & \set{((q_1,f),\varepsilon,(f,q_1)) \mid q_1 \in F_1} \\
& \cup & \set{((q_2,f),\varepsilon,(f,q_2)) \mid q_2 \in F_2} \\
\end{array}$$
Intuitively, the transition relation is as follows: either one is reading using $\A_1$ or $\A_2$.
In both cases, the automaton remembers the last final state visited while 
reading in the other automaton in order to restore this state for the next word.
Let $\vec{w}$ accepted by $\A$, an accepting run is as follows:
\[
(q_1^0,q_2^0) \xrightarrow{w_0} (q_1^1,q_2^1) \xrightarrow{w_1} \ldots (q_1^i,q_2^i) \xrightarrow{w_i} (q_1^{i+1},q_2^{i+1}) \ldots
\]
where $(q_1^0,q_2^0) \in Q_1^0 \times Q_2^0$, for all $i \geq 1$, we have
$(q_1^i,q_2^i) \in F_1 \times F_2$ and $(|w_n|)_n$ bounded.
From the construction, for all $i \in \N$, we have 
$w_i = u_i^0 \cdot v_i^0 \cdot u_i^1 \cdot v_i^1 \ldots u_i^{k_i} \cdot v_i^{k_i}$,
where 
\[
q_1^i = q_1^i(0) \xrightarrow{u_i^0} q_1^i(1) \xrightarrow{u_i^1} q_1^i(2) \ldots \xrightarrow{u_i^{k_i}} q_1^i(k_i +1) = q_1^{i+1} \quad \mbox{ in } \A_1
\]
\[
q_2^i = q_2^i(0) \xrightarrow{v_i^0} q_2^i(1) \xrightarrow{v_i^1} q_2^i(2) \ldots \xrightarrow{v_i^{k_i}} q_2^i(k_i +1) = q_2^{i+1} \quad \mbox{ in } \A_2
\]
the states $(q_1^i(k),q_2^i(k))$ belong to $F_1 \times F_2$.
We define $u_i = u_i^0 u_i^1 \ldots u_i^{k_i}$ and $v_i = v_i^0 v_i^1 \ldots v_i^{k_i}$.
From the above follows that $\vec{u}$ and $\vec{v}$ are accepted by $\A_1$ and $\A_2$, respectively.
Then $\vec{w} \in (M_1 \cdot M_2)^B$.

Conversely, a sequence in $(M_1 \cdot M_2)^B$ is clearly accepted by $\A$.
Hence $\A$ recognizes $(M_1 \cdot M_2)^B$.
\end{itemize}

We now prove that all star-free $\omega B$-regular expressions are recognized 
by a non-deterministic finitary B\"uchi automaton.
Since $\NFB$ are closed under finite union (Theorem~\ref{closure}), we
only need to consider expressions $L \cdot M^\omega$,
where $L \subseteq \Sigstar$ is regular language of finite words and 
$M$ star-free $B$-regular expression.
The constructions above ensure that there exists 
$\A_M = (Q_M,\Sigma,\delta_M,Q_M^0,F_M)$, a 
non-deterministic finitary B\"uchi automaton that recognizes the 
language $M^B$ of infinite sequences.
Let $\A_L = (Q_L,\Sigma,\delta_L,Q_L^0,F_L)$ be a finite automaton over 
finite words that recognizes  $L$.
We construct a non-deterministic finitary B\"uchi automaton as follows:
$\A = (Q_L \cup Q_M,\Sigma,\delta,Q^0_L,F_M)$ where 
$\delta = \delta_L \cup \delta_M \cup \set{(q,\varepsilon,q') \mid q \in F_L, q' \in Q_M^0}$.
In other words, first $\A$ simulates $\A_L$, and when a finite prefix is 
recognized by $\A_L$, then $\A$ turns to $\A_M$ and simulates it.

We argue that $\A$ recognizes $L \cdot M^\omega$.
Let $w$ accepted by $\A$, and $u$ the finite prefix read by $\A_L$, $w = u\cdot v$.
From $v$ infinite word, we define $\vec{v}$ an infinite sequence of finite 
words by sequencing $v$ each time a final state (i.e., from $F_L$) is visited. 
The sequence $\vec{v}$ is accepted by $\A_M$, hence belongs to $M^B$, and since
$\vec{v}^\omega = v$, we have $v \in (M^B)^\omega = M^\omega$, and finally 
$w \in L \cdot M^\omega$.
Conversely, let $w = u \cdot \vec{v}^\omega$, where $u \in L$ and $\vec{v} \in M^B$.
Let $q_0 \in Q_L^0, q \in F_L$ such that $q_0 \xrightarrow{u} q$.
Let $q' \in Q_0, q_1,q_2,\ldots \in F_L$, such that for all $i \in \N$ 
we have $q_i \xrightarrow{v_i} q_{i+1}$.
The \emph{key}, yet simple observation is that for all star-free 
$B$-regular expressions $M$ and for all $\vec{v} \in M$ we have 
$(|v_n|)_n$ is bounded.
This is straightforward by induction on $M$.
Hence, from position $|u|$, the set $F_L$ is visited infinitely many times,
and there is a bound between two consecutive visits. Thus $w$ is accepted by $\A$.
\hfill\qed
\end{proof}

The following theorem follows from Lemma~\ref{lemm_wb1} and 
Lemma~\ref{lemm_wb2}.

\begin{theorem}\label{starfree}
$\NFB$ has exactly the same expressive power as star-free $\omega B$-regular 
expressions.
\end{theorem}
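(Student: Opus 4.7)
The plan is to prove Theorem~\ref{starfree} as an immediate corollary of Lemma~\ref{lemm_wb1} and Lemma~\ref{lemm_wb2}: together they yield a double inclusion between $\NFB$ and the class of languages defined by star-free $\omega B$-regular expressions, which is exactly what the theorem asserts. So the proof is essentially a one-line assembly of the two lemmas, with the substantive work already done.

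For the first direction, Lemma~\ref{lemm_wb1} shows $\NFB \subseteq$ star-free $\omega B$-regular: starting from $\A = (Q,\Sigma,\delta,Q_0,F)$ one writes $\LA = \bigcup_{q_0 \in Q_0,\ q \in F} L_{q_0,q} \cdot (M_q)^\omega$, with each $L_{q_0,q}$ regular over finite words and each $M_q$ obtained by the induction $M^k_{q,q'} = M^{k-1}_{q,k} \cdot (M^{k-1}_{k,k})^B \cdot M^{k-1}_{k,q'} + M^{k-1}_{q,q'}$, which uses the $B$-operator but never the Kleene $*$ on $B$-expressions, so the resulting expression is indeed star-free.

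For the converse direction, Lemma~\ref{lemm_wb2} gives the translation from star-free $\omega B$-regular expressions to $\NFB$ by structural induction. The delicate cases were $M_1 + M_2$, handled by the guess/subset construction that tracks which final states of the unused automaton would have been reachable, and $M_1 \cdot M_2$, handled by remembering pending final states across the alternation; the outer $L \cdot M^\omega$ is then obtained by prepending a finite automaton for $L$ via an $\varepsilon$-transition, and a finite union of such expressions is handled using closure of $\NFB$ under union (Theorem~\ref{closure}).

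Combining these two inclusions immediately yields the equality claimed in Theorem~\ref{starfree}, so the proof amounts to citing the two lemmas. There is no real obstacle remaining at this stage: the entire technical burden (in particular the inductive star-free decomposition of $M_q$ in the forward direction, and the guess-and-track construction for $M_1 + M_2$ in the reverse direction) has already been discharged by the preceding lemmas, and the theorem is simply their conjunction.
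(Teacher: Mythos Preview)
Your proposal is correct and matches the paper's approach exactly: the paper simply states that Theorem~\ref{starfree} follows from Lemma~\ref{lemm_wb1} and Lemma~\ref{lemm_wb2}, and your write-up is just an expanded recap of those two lemmas assembled into the double inclusion. Nothing further is needed.
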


\section{Decision Problems}
In this section we consider the complexity of the decision 
problems for finitary languages.
We present the results for finitary B\"uchi automata for 
simplicity, but the arguments for finitary parity and 
Streett automata are similar.

For the proofs of the results of this section we need to consider 
co-B\"uchi conditions (dual of B\"uchi conditions): given a set $F$,
it requires that elements that appear infinitely often are 
outside $F$, in other words, elements in $F$ appear only finitely often.
It maybe noted that co-B\"uchi and finitary co-B\"uchi conditions coincide.
We will also consider co-finitary B\"uchi condition, that is the 
complement of a finitary B\"uchi condition: given a set $F$ co-finitary B\"uchi condition for $F$ is the complement of $\fbucf$, that is $\So \backslash \fbucf$.

\begin{lemma}\label{lem-decision}
Let $\A = (Q, \Sigma, Q_0, \delta_, F_b, F_c)$ be an automaton with 
$F_b$ and $F_c$ are subsets of $Q$. 
Consider the acceptance condition $\Phi_1$ as the conjunction of 
the finitary B\"uchi condition with set $F_b$, and the co-finitary B\"uchi condition 
with set $F_c$; and the acceptance condition $\Phi_2$ as the 
conjunction of B\"uchi condition with set $F_b$, and the co-B\"uchi condition 
with set $F_c$.
The following assertions hold:
\begin{enumerate}
\item The answer of the emptiness problem of $\A$ for $\Phi_2$ is Yes iff there is a 
cycle $C$ in $\A$ such that $C \cap F_b \neq \emptyset$ and $C \cap F_c = \emptyset$.

\item The answer of the emptiness problem for $\Phi_1$ and $\Phi_2$ coincide.

\item The emptiness problem for $\Phi_1$ is decidable in $\NLOGSPACE$.

\end{enumerate}
\end{lemma}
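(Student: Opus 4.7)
The plan is to handle the three parts in order, with each part feeding into the next.

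For part (1), I will use the standard graph-theoretic characterization of B\"uchi plus co-B\"uchi emptiness. For the \emph{if} direction, given a cycle $C$ with $C \cap F_b \neq \emptyset$ and $C \cap F_c = \emptyset$ together with a finite path from some initial state to a state of $C$, the ultimately periodic run that first follows this path and then loops forever around $C$ visits $F_b$ infinitely often and visits $F_c$ only finitely often, and is thus accepted by $\Phi_2$. For the \emph{only if} direction, given an accepting run $\rho$, the set $S$ of states visited infinitely often in $\rho$ forms a strongly connected subgraph; $S$ must intersect $F_b$ (since B\"uchi is satisfied) and must be disjoint from $F_c$ (since co-B\"uchi is satisfied, so $F_c$ is visited only finitely often). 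Any simple cycle in $S$ passing through an $F_b$-state is the required witness $C$.

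For part (2), one direction is immediate from part (1): given such a cycle $C$, the ultimately periodic run following a prefix into $C$ and then looping around $C$ has distance to $F_b$ bounded by $|C|$ (so finitary B\"uchi holds) and avoids $F_c$ from some point on (so co-B\"uchi holds, which implies co-finitary B\"uchi). The subtler direction, which I anticipate as the main obstacle, is that $\Phi_1$-acceptance forces the existence of such a cycle: a $\Phi_1$-accepting run $\rho$ visits $F_b$ with bounded gaps, yet may still visit $F_c$ infinitely often, only with unbounded gaps. To recover a cycle witnessing part (1), I will apply a pumping-style argument to the arbitrarily long $F_c$-free segments of $\rho$: each sufficiently long segment must revisit a state and thereby contain a cycle in the subgraph of $\A$ obtained by deleting $F_c$; since $F_b$ is visited within a fixed bound in $\rho$, each sufficiently long $F_c$-free segment meets $F_b$, and by finiteness of $Q$ some such cycle through an $F_b$-state recurs, giving the cycle required by part (1).

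For part (3), the characterization from parts (1) and (2) reduces emptiness of $\A$ under $\Phi_1$ to a pure reachability question on the transition graph of $\A$: decide whether there exists a state $q \in F_b \setminus F_c$ that is reachable from some state of $Q_0$ in $\A$ and is reachable from itself by at least one transition in the subgraph of $\A$ restricted to $Q \setminus F_c$. A nondeterministic log-space machine performs this by guessing $q$ and then running two successive standard directed-graph reachability checks; since $\NLOGSPACE$ is closed under sequential composition, the whole procedure stays within $\NLOGSPACE$.
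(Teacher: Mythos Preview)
Your proposal is correct and takes a genuinely more elementary route than the paper. For parts (1) and (2), the paper does not argue directly: it observes that $\Phi_2$ is a Rabin one-pair condition and that $\Phi_1$ can be encoded as a finitary parity condition with three priorities (assign priority~$1$ to $F_c$, priority~$2$ to $F_b \setminus F_c$, priority~$3$ elsewhere), and then invokes existing memoryless-strategy results for Rabin conditions and for finitary parity objectives to conclude that non-emptiness is witnessed by a lasso, hence by a cycle of the required form. Your direct argument via the set of infinitely-visited states for (1) and your pumping argument on long $F_c$-free segments for (2) avoid these external citations entirely and make the lemma self-contained; the paper's approach is shorter on the page but outsources the work. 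For part (3) both you and the paper reduce to guessed reachability in log-space; the paper phrases it as guessing the prefix and the cycle one state at a time, you phrase it as two reachability queries about a single guessed $q \in F_b \setminus F_c$, which amounts to the same thing.

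One point in your plan for part (2) deserves tightening when you write it out: from ``each long $F_c$-free segment contains a state repetition'' together with ``each long $F_c$-free segment meets $F_b$'' it does not yet follow that the repeated state can be chosen in $F_b$. The clean fix is to take the $F_c$-free segment long enough---say of length exceeding $|Q|\cdot(B+1)$, where $B$ is the eventual bound on the distance to $F_b$---so that it contains more than $|Q|$ visits to $F_b$, and then apply pigeonhole to the $F_b$-visits themselves. The closed walk between two occurrences of the same $F_b$-state then lies entirely inside $Q \setminus F_c$ and passes through $F_b$, giving exactly the cycle required by part (1).
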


\begin{proof}
We prove the results as follows. 
\begin{enumerate}
\item We first prove parts 1. and 2. 
Without loss of generality we assume that for all $q \in Q$, there exists a path from an initial state $q_0 \in Q$ to $q$ (otherwise we can delete $q$).
If there is a cycle $C$ with $C \cap F_b \neq \emptyset$ and $C \cap F_c = \emptyset$, then consider a finite word $u$ to reach $C$, and a word $v$ that execute $C$.
The word $u \cdot v^\omega$ is a witness that $\A$ with $\Phi_1$ as well as $\Phi_2$ is non-empty.
Conversely, the condition $\Phi_2$ is a Rabin 1-pair condition, and by existence 
of memoryless strategies for Rabin condition~\cite{EJ88-FOCS}, it follows that if 
$\A$ is non-empty for $\Phi_2$, then there must be a cycle $C$ in $\A$ such that $C \cap F_b \neq \emptyset$ and $C \cap F_c = \emptyset$.
The condition $\Phi_1$ can be specified as a finitary parity condition 
with three priorities ($1,2,3$) by assigning priority~1 to states in $F_c$, 
$2$ to states in $F_b \setminus F_c$, and $3$ to the rest. 
By existence of memoryless strategies for finitary parity objectives~\cite{CHH09-ToCL},
it follows that if $\A$ is non-empty for $\Phi_1$, then there must be a cycle $C$ in $\A$ such that $C \cap F_b \neq \emptyset$ and $C \cap F_c = \emptyset$.
The result follows.

\item The result follows from the emptiness problem of non-deterministic Rabin 1-pair automata. 
The basic idea of the proof is as follows: we show that the witness cycle $C$ 
can be guessed and verified in logarithmic space.
The guesses are as follows: (a)~ first the initial prefix of the path to $C$ is guessed
by guessing one state (the next state) at a time 
(hence only one guess is made at a time which is logarithmic space), 
(b)~then the starting state of the cycle $C$ is guessed and stored (again in 
logarithmic space), and (c)~the cycle is guessed by again considering one 
state at a time and at each step it is verified that the state generated is in $Q \setminus F_c$; 
(d)~one state in the cycle such that the state is in $F_b$ is guessed and 
verified; and 
(e)~finally it is checked that the cycle is completed by visiting the starting
state of the cycle.
Hence at every step only constantly many guesses are made, stored and verified. 
The $\NLOGSPACE$ upper bound follows.
\end{enumerate}
The desired result follows.
\hfill\qed
\end{proof}

\begin{theorem}[Decision problems]\label{thrm_decision}
The following assertions hold:
\begin{enumerate}
\vspace{-2.5mm}
	\item \emph{(Emptiness).} Given a finitary B\"uchi automaton $\A$, whether 
$\LA = \emptyset$ is $\NLOGSPACE$-complete and can be decided
in linear time.

	\item \emph{(Universality).} Given a finitary B\"uchi automaton $\A$
whether $\LA=\Sigma^\omega$ is $\PSPACE$-complete.
 
	\item \emph{(Language inclusion).} 
Given two finitary B\"uchi automata $\A$ and $\B$, whether $\LA \subseteq \LB$
is $\PSPACE$-complete.
\end{enumerate}
\end{theorem}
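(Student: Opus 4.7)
My plan treats the three parts of Theorem~\ref{thrm_decision} in turn, using Lemma~\ref{lem-decision} as the central technical tool. For emptiness, apply Lemma~\ref{lem-decision} with $F_c = \emptyset$ and $F_b = F$, where $F$ is the B\"uchi set of the input automaton $\A$: part~2 reduces emptiness to finding a reachable cycle through $F$, part~3 gives the $\NLOGSPACE$ upper bound, and a linear-time algorithm follows from a standard strongly-connected-component decomposition. For $\NLOGSPACE$-hardness I reduce directed $s$-$t$-reachability: view the input digraph as a one-letter finitary B\"uchi automaton with $Q_0 = \set{s}$, add a self-loop at $t$, and set $F = \set{t}$; then $\LA \neq \emptyset$ iff $t$ is reachable from $s$.

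For universality, I prove $\PSPACE$-hardness by reducing NFA universality. Given an NFA $\A_0$ over $\Sigma$, I construct a finitary B\"uchi automaton $\B$ over $\Sigma \cup \set{\#}$: its states are $Q_{\A_0} \cup \set{a, r}$ with $a, r$ fresh sinks, transitions on letters of $\Sigma$ simulate $\A_0$, the first $\#$ sends $q \in F_{\A_0}$ to $a$ and $q \notin F_{\A_0}$ to $r$, and both $a$ and $r$ loop on every letter; the B\"uchi set of $\B$ is $Q_{\A_0} \cup \set{a}$. A word $w$ is accepted iff either $w$ contains no $\#$ (every run stays in $Q_{\A_0}$, hence in the B\"uchi set, with gap $0$) or the prefix of $w$ before its first $\#$ lies in $L(\A_0)$ (some run reaches $a$ and loops). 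Hence $\LB = (\Sigma \cup \set{\#})^\omega$ iff $L(\A_0) = \Sigma^*$. For the $\PSPACE$ upper bound I reduce universality to emptiness of a complement: construct a nondeterministic B\"uchi automaton $\A^c$ of size $2^{O(|\A|)}$ recognising $\So \setminus \LA$ by a subset-construction-based complementation in the style of Miyano--Hayashi, whose states are pairs $(S, T)$ with $T \subseteq S \subseteq Q$ tracking all currently reachable states $S$ together with an obligation subset $T$ of pending runs, and whose B\"uchi acceptance certifies the universal co-finitary condition ``every run $\rho$ of $\A$ on $w$ has $\limsup_k \nextk(\rho, F) = \infty$''. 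Emptiness of $\A^c$ is $\NLOGSPACE$ in its size by Lemma~\ref{lem-decision}, giving $\PSPACE$ in $|\A|$.

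For language inclusion, the upper bound is immediate: $\LA \not\subseteq \LB$ iff $\LA \cap (\So \setminus \LB) \neq \emptyset$, and forming the product of $\A$ with the exponential-size $\B^c$ reduces this to emptiness in $\NLOGSPACE$ in a product of size $|\A| \cdot 2^{O(|\B|)}$, hence $\PSPACE$ in $|\A| + |\B|$. The lower bound is immediate by fixing $\A$ to be a single-state universal automaton accepting $\So$, so that $\LA \subseteq \LB$ becomes universality of $\B$.

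The main obstacle is the complementation step in the universality upper bound. Finitary B\"uchi is not closed under complement (Theorem~\ref{closure}), so $\A^c$ must leave the finitary class; moreover, the condition it must check is the universally quantified $\Sig{2}$-style ``every run has unboundedly large gaps'', not merely the more common ``every run visits $F$ only finitely often'' that the Miyano--Hayashi construction natively captures. The proof of correctness of $\A^c$ will therefore hinge on a pumping argument in the subset construction that converts infinitely many obligation resets into arbitrarily long $F$-free stretches on every individual run of $\A$.
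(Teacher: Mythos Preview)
Your treatment of emptiness and of the lower bounds for universality and inclusion is essentially the paper's, with a cosmetic difference: the paper obtains $\PSPACE$-hardness of universality by citing that NFA universality is already $\PSPACE$-hard when all accepting states are absorbing (such an NFA reads unchanged as a finitary B\"uchi automaton), whereas you give an explicit $\#$-based reduction. Both are fine.

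The genuine gap is in your $\PSPACE$ upper bound. You propose to build a nondeterministic \emph{B\"uchi} automaton $\A^c$ recognising exactly $\So\setminus\LA$. This is impossible in general: by Example~\ref{examp1} the language $L_B\in\DFB$ is not $\omega$-regular, hence neither is $\So\setminus L_B$, so no B\"uchi automaton can recognise it. Your hoped-for ``pumping argument'' cannot repair this, since no argument can make a B\"uchi automaton accept a non-$\omega$-regular language. Even the weaker hope---that the Miyano--Hayashi complement of $\A$ viewed as a \emph{classical} B\"uchi automaton at least has the correct \emph{emptiness} behaviour---fails. Take $\A$ over $\set{a,b}$ with two disjoint deterministic components, one visiting $F$ exactly at $a$-positions and the other exactly at $b$-positions. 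Every infinite word has infinitely many $a$'s or infinitely many $b$'s, so the classical B\"uchi language of $\A$ is $\So$ and its Miyano--Hayashi complement is empty; yet the word $a\,b\,a^2b^2a^3b^3\cdots$ has unbounded gaps on \emph{both} runs, so it lies outside the finitary B\"uchi language $\LA$, and $\LA\neq\So$.

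The paper sidesteps this by never attempting an exact complement. For inclusion it takes the product of $\A$ with the plain subset construction of $\B$ and equips it with the \emph{non}-$\omega$-regular acceptance condition ``finitary B\"uchi on $F_b$ and co-finitary B\"uchi on $F_c$'', where $F_b$ tracks $F_A$ in the first component and $F_c$ is defined from the second. The whole point of Lemma~\ref{lem-decision}(2) is that \emph{emptiness} of this exotic conjunction coincides with emptiness of the ordinary Rabin one-pair condition (B\"uchi on $F_b$ and co-B\"uchi on $F_c$), which is a reachable-cycle test in $\NLOGSPACE$ on the exponential product, hence $\PSPACE$ overall. The universality upper bound then follows from inclusion. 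The idea you are missing is therefore: keep a finitary/co-finitary condition on the product and use Lemma~\ref{lem-decision} to collapse it to a classical condition \emph{at the level of emptiness only}, rather than trying to land in an $\omega$-regular class by complementation.
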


\begin{proof}
We show the three parts of the proof.
\begin{enumerate}
\item The $\NLOGSPACE$ upper bound follows from Lemma~\ref{lem-decision}: 
we consider the special case where the set $F_c$ is empty.
The $\NLOGSPACE$ lower bound follows from $\NLOGSPACE$-hardness of reachability problem in a 
directed graph: given $s$ and $t$ two vertices, is there a path from $s$ to $t$? Given a directed graph and $s,t$ two vertices, the corresponding automaton has $s$ as initial vertex, $t$ as unique final vertex, and we add a self-loop over $t$. Then there is a path from $s$ to $t$ if and only if the language accepted by this finitary B\"uchi automaton is non-empty.
This concludes since co-$\NLOGSPACE = \NLOGSPACE$.

\item The $\PSPACE$ upper bound will follow from the following $\PSPACE$ upper bound 
for language inclusion, item 3. 
The $\PSPACE$ lower bound follows from the $\PSPACE$ lower bound for finite automata.
The universality problem for automata over finite words is $\PSPACE$-hard even when all the 
accepting states are absorbing~\cite{MeySto72}.
For such automata over finite words the acceptance is the same as for finitary B\"uchi condition. 
The result follows.

\item The $\PSPACE$ lower bound follows from item 2. by the $\PSPACE$-hardness for universality.
We now present the $\PSPACE$ upper bound.
Let $\A = (Q_A, \Sigma, Q_{A,0}, \delta_A, F_A)$ and $\B = (Q_B, \Sigma, Q_{B,0}, \delta_B, F_B)$ be two 
finitary B\"uchi automata. 
Let $\A \times \overline{\B} = (Q_A \times 2^{Q_B}, \Sigma, (Q_{A,0},Q_{B,0}), \delta, F_b, F_c)$ be an automaton where
for all $s \in Q_A, S \subseteq Q_B$ and $\sigma \in \Sigma$, 
$$\delta((s,S),\sigma) = \bigcup_{q \in S} \set{(s',q') \mid s' \in \delta_A(s,\sigma), q' \in \delta_B(q,\sigma)}$$
and $F_b = \set{(s,S) \mid s \in F_A}$ and $F_c = \set{(s,S) \mid S \cap F_B = \emptyset}$.
In other words $\A \times \overline{\B}$ is synchronous product of $\A$ and the power set (subset construction) of $\B$.
The acceptance condition is the conjunction of the finitary B\"uchi condition
with set $F_b$ and co-finitary B\"uchi condition with set $F_c$. 

We claim that $\La(\A \times \overline{\B}) = \emptyset$ iff $\LA \subseteq \LB$.

Assume $\La(\A \times \overline{\B}) \neq \emptyset$, then 
there is a cycle $C$ such that $C \cap F_b \neq \emptyset$ and $C \cap F_c = \emptyset$.
The lasso word that executes the finite path to reach $C$ and then execute it
forever is a witness word that is accepted by $\A$ but not by $\B$. 
Hence $\LA \not\subseteq \LB$.

Assume $\La(\A \times \overline{\B}) = \emptyset$, then
every words accepted by $\A$ is not accepted by $\overline{\B}$, hence accepted by $\B$.
Thus $\LA \subseteq \LB$.

Since the construction is exponential and the non-emptiness problem can be 
decided in $\NLOGSPACE$ (Lemma~\ref{lem-decision}), 
we obtain a $\NPSPACE = \PSPACE$ upper bound.
\end{enumerate}
The result follows.
\hfill\qed
\end{proof}

\medskip\noindent{\bf Acknowledgements.} We thank Thomas Colcombet for explaining to us results related to $\omega B$-automata.

\bibliography{people,short,papers}
\bibliographystyle{alpha}

\end{document}